\newcommand{\msg}[1]{s_{{#1}}}
\newcommand{\spack}[1]{x_{{#1}}^{(1)}}
\newcommand{\rpack}[1]{x_{{#1}}^{(2)}}
\newcommand{\dpack}[1]{y_{{#1}}^{(2)}}
\newtheorem{theorem}{Theorem}
\newtheorem{lemma}{Lemma}
\newtheorem{proposition}{Proposition}
\newtheorem{definition}{Definition}
\newtheorem{remark}{Remark}
\begin{document}
\title{Subset Adaptive Relaying for Streaming Erasure Codes}

\author{%
  \IEEEauthorblockN{Muhammad Ahmad Kaleem, Gustavo Kasper Facenda and Ashish Khisti\\}
  \IEEEauthorblockA{Department of Electrical and Computer Engineering\\ University of Toronto \\ 
                    Toronto, ON M5S 1A1, Canada\\
  }
}

 \maketitle

\begin{abstract}

  This paper investigates adaptive streaming codes over a three-node relayed network. In this setting, a source transmits a sequence of message packets through a relay under a delay constraint of $T$ time slots per packet. The source-to-relay and relay-to-destination links are unreliable and introduce a maximum of $N_1$ and $N_2$ packet erasures respectively.  Recent work has proposed adaptive (time variant) and nonadaptive (time invariant) code constructions for this setting and has shown that adaptive codes can achieve higher rates. However, the adaptive construction deals with many possibilities, leading to an impractical code with very large block lengths. In this work, we propose a simplified adaptive code construction which greatly improves the practicality of the code, with only a small cost to the achievable rates. We analyze the construction in terms of the achievable rates and field size requirements, and perform numerical simulations over statistical channels to estimate packet loss probabilities. 
  \end{abstract}

  \section{Introduction}

  Many modern applications including online gaming and video conferencing require efficient low-latency communication. In these applications, data packets are generated sequentially at the source and must be recovered under strict delay constraints at the destination. When packets are lost over the network, significant amounts of error propagation can occur and thus suitable methods for error correction are necessary. 
  
  There are two main approaches for error correction against packet losses in communication networks: Automatic repeat request (ARQ) and Forward error correction (FEC). ARQ, which involves retransmission, is not suitable when considering low latency constraints over long distances. Therefore, FEC schemes are considered more appropriate for low latency settings. FEC codes with strict decoding delay constraints have been specifically referred to as streaming codes. 
  
  While most prior work on streaming codes has focused on a point-to-point communication link, a network topology of practical interest is a three-node relay network which involves a relay node between the source and destination. This topology is motivated by numerous applications in which a gateway server connects two end nodes. Streaming codes in this setting were first introduced in \cite{silas2019} and the time-invariant capacity was derived. Following this, \cite{adaptiverelay2023} proposed adaptive code constructions where the relay uses different rate codes depending on the number of erasures observed in the source-to-relay link. This leads to more efficient relay-to-destination codes and allows the overall rate to be improved whenever the relay-destination link is the bottleneck.

  On the other hand, the scheme presented in \cite{adaptiverelay2023} requires prohibitively large packet sizes in most cases. For example, for a choice of parameters, that scheme would require 100~KB per packet, which is larger than the maximum packet size for UDP, and significantly larger than any practical packet size considered. This stems from the use of long maximum distance separable (MDS) codes, which lead to a high field size, as well as large packet sizes in terms of symbols (i.e., number of elements of said field present in a packet). In order to obtain a practical construction, we propose an adaptation scheme that employs short MDS codes, thus reducing the field size, and having smaller overall packet sizes in terms of number of symbols. In order to do so, we reduce the adaptation power of the relay, which slightly reduces the resulting rate, but allows the coding scheme to work with packets as small as 60 bytes, compared to the previous 100~KB. We note that these changes also make the relaying and decoding computationally simpler, although this is not a focus of the paper.

  We then evaluate our scheme under statistical erasure models, simulating the resulting packet loss rates. This was not done previously, as the complexity of the previous scheme made simulations intractable.

  Finally, we show that our construction can be naturally extended to the multi-user setting \cite{kasper2022}, and, by doing so, we are able to achieve rates higher than the sumrate upper bound for non-adaptive schemes presented in that work.

  The rest of the paper is organized as follows. Section \ref{sec:background} discusses background and prior work on streaming codes in different contexts and section \ref{sec:sysmodel} presents the system model for the three node relay network. In Section \ref{sec:mainres}, we introduce the main result of this work which is followed by the details of the proposed coding scheme in section \ref{sec:subset} and the proof of the result in section \ref{sec:anal}. Finally, section \ref{sec:numresult} shows results of numerical simulations of the proposed scheme while section \ref{sec:extension} presents the extension to the multiaccess network setting. 

  \section{Background and Prior Work}\label{sec:background}

  Prior work has studied different types of streaming codes to establish fundamental limits of reliable low-latency communication under different packet-loss models. In particular, \cite{martinian2004burst} has studied the point-to-point network (one source node, one destination node) under burst erasure sequences, \cite{leong2012erasure} has studied burst and arbitrary erasures seperately, and \cite{badr2013streaming} has extended the erasure patterns, allowing for both burst and arbitrary erasures. Other works that have further studied different aspects of streaming codes include \cite{joshi2012playback,Karzand2017,badr2017layered,badr2017fec,krishnan2018rate,fong2019optimal,domanovitz2019explicit,KrishnanLowField2020,Haghifam2021}.
  
  The three node relay network setting we focus on in this paper can be used to model common real world communication scenarios such as video conferencing and online gaming where two users communicate through an intermediate server. In these settings, low latency is desired as it can greatly improve user experience \cite{Jarschel2013,Quax2013,Clincy2013,Claypool2014,Slivar2014}. The latency itself may come from various sources, including hardware delay, propagation delay, server-side delay, or communications delay. Optimizations to reduce several of these delays have been studied e.g. \cite{lee2014outatime,Slivar2015}, however the communications delay has only recently been explored by \cite{adaptiverelay2023}. Since the communications delay is a signficant part of the overall latency, retransmissions would represent a significant cost and thus streaming codes can help reduce the overall delay significantly. 
  
  Streaming codes in the three node relay setting were first studied by \cite{silas2019} where the time-invariant capacity was derived and it was shown that symbol-wise decoding methods, as opposed to traditional message-wise methods, are optimal. \cite{adaptiverelay2023} extended these results with an adaptive relaying strategy in which the relay takes into account the erasure pattern from source to relay when forwarding symbols to the destination. This method led to strictly higher achievable rates than \cite{silas2019}.
  Finally, \cite{kasper2022} has studied streaming codes in the multiaccess network setting, where the network is modeled similarly to the three node relay network but with multiple source nodes. There, the time-invariant capacity region was derived by extending ideas from time-invariant single user codes.

  \section{System Model}\label{sec:sysmodel}

  In this section, we formally introduce the problem setting and relevant notation. We denote the set of integers by $\mathbb{Z}_{+}$, the finite field over $q$ elements by $\mathbb{F}_q$ and the set of $l$-dimensional column vectors over $\mathbb{F}_q$ by $\mathbb{F}_q^l$. For $a, b \in \mathbb{Z}_{+}$, we use $[a:b]$ to denote $\{i \in \mathbb{Z}_{+} | a \leq i \leq b\}$. Similarly, we let $[a:c:b] = \{i \in \mathbb{Z}_{+} | i = a + kc \;, 0 \leq k \leq \lfloor \frac{b-a}{c} \rfloor \}$.

    We consider a network with one source, one relay and one destination. The source wishes to transmit a sequence of messages $\{s_{t} \}_{t = 0}^{\infty}$ to the destination through the relay. We assume there is no direct link between the source and destination, and that the packet communication in both links is instantaneous i.e. no propagation delay. We assume that the link between the first source and the relay introduces at most $N_1$ erasures, and that the link between the relay and the destination introduces at most $N_2$ erasures. 
    The destination wishes to decode the source packet with a maximum delay of $T$ timeslots. A visual representation of this network is shown in Figure \ref{fig:3node}.

  \begin{figure}[htbp]
    \centerline{\includegraphics[width=\linewidth]{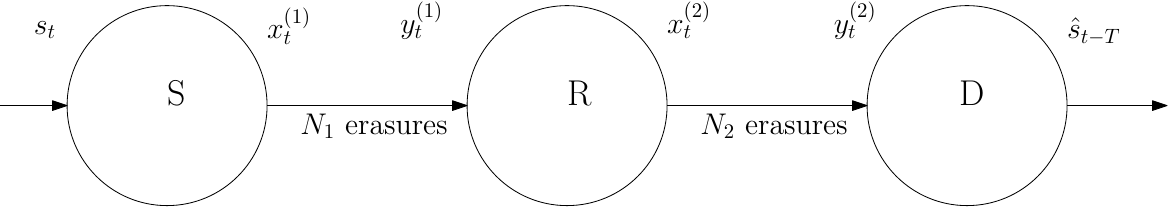}}
    \caption{Three node relay network}
    \label{fig:3node}
    \end{figure}

The following definitions from \cite{silas2019} formalize the notion of a streaming code in the three node setting and the associated concepts of erasures and achievable rates.

\begin{definition}\label{def:streamingcode}
      $(n_1, n_2, k, T)_{q}$-streaming code consists of the following:
      \begin{itemize}
          \item A sequence of source messages $\{s_{t}\}_{t=0}^{t = \infty}$ where $s_{t} \in \mathbb{F}_q^{k}$.
          \item An encoding function $$f_{t} : \underbrace{\mathbb{F}_q^{k} \times \cdots \times \mathbb{F}_q^{k} }_{t+1 \textrm{ times}} \to \mathbb{F}_q^{n_1}$$
      used by the source at time $t$ to generate $\spack{t} = f_{t} (\msg{0}, \msg{1}, \dots, \msg{t})$.
      \item A relaying function

         \begin{equation*}
              g_t: \underbrace{\mathbb{F}_q^{n_1} \cup \{*\} \times \dots \times \mathbb{F}_q^{n_1} \cup \{*\}}_{t+1 \; \text{times}} \to \mathbb{F}_q^{n_2}   
          \end{equation*}

      used by the relay at time $t$ to generate 
          
          \begin{equation*}
              x_t^{(2)} = g_t(\{y_{i}^{(1)}\}_{i=0}^t )
          \end{equation*}
  
      \item A decoding function 
      \begin{align*}
          \varphi_{t} = \underbrace{\mathbb{F}_q^{n_2} \cup \{*\} \times \cdots \times \mathbb{F}_q^{n_2} \cup \{*\}}_{t+T+1\textrm{ times}} \to \mathbb{F}_q^{k} \\
      \end{align*}
      used by the destination at time $t + T$ to generate
      \begin{align*}
          \hat{s}_{t} = \varphi_{t}(y^{(2)}_{0},y^{(2)}_{1},\ldots,y^{(2)}_{t+T})\\
      \end{align*}
      \end{itemize}
  \end{definition}

  \begin{definition}\label{def:erasureseq}
    An erasure sequence is a binary sequence denoted by $e^{(1)} \triangleq \{e^{(1)}_{t}\}_{t = 0}^{\infty}$, where $e_t^{(1)}$ denotes whether an erasure occurs in the source-to-relay link at time $t$.

      Similarly, $e^{(2)} \triangleq \{e^{(2)}_t\}_{t = 0}^{\infty}$ where $e_t^{(2)}$ denotes whether an erasure occurs in the relay-to-destination link at time $t$.

    An $N$-erasure sequence is an erasure sequence $e$ that satisfies $\sum_{t=0}^{\infty} e_t = N$. In other words, an $N$-erasure sequence specifies $N$ arbitrary erasures on the discrete timeline. The set of $N$-erasure sequences is denoted by $\Omega_N$. 
  \end{definition}
  
  \begin{definition} \label{def:channel}
    The mapping $h_n : \mathbb{F}_q^{n} \times \{0, 1\} \to \mathbb{F}_q^{n} \cup \{*\}$ of an erasure channel is defined as
    \begin{align}
    h_n(x, e) = \begin{cases}
      x, &\textrm{~if~} e = 0\\
    *, &\textrm{~if~} e = 1  
    \end{cases} \label{eq:erasuremodel}
    \end{align}
    For any erasure sequence $e^{(1)}$ and any $(n_1, n_2, k,  T)_{q}$-streaming code, the following input-output relation holds for each $t \in \mathbb{Z}_+$:
    \begin{align}
    y_{t}^{(1)} = h_{n_1}(x^{(1)}_{t}, e^{(1)}_{t})\\
    \end{align}
    where $e^{(1)} \in \Omega_{N_1}$.
    Similarly, the following input-output relation holds for for each $t \in \mathbb{Z}_+$:
    \begin{align}
    y_{t}^{(2)} = h_{n_2}(x_{t}^{(2)}, e^{(2)}_{t}) \label{eq:erasuredest}
    \end{align}
    where $e^{(2)} \in \Omega_{N_2}$.
  \end{definition}

  \begin{definition}
    An $(n_1, n_2, k,  T)_{q}$-streaming code is said to be $(N_1, N_2)$-achievable if, for any $e^{(1)} \in \Omega_{N_1}$ and $e^{(2)} \in \Omega_{N_2}$, for all $t \in \mathbb{Z}_+$ and all $s_{t} \in \mathbb{F}_q^{k}$, we have $\hat{s}_{t} = s_{t}$.
  \end{definition}

\begin{remark}\label{rem:1}
    While Definition \ref{def:erasureseq} may appear limiting as it considers $\Omega_{N_1}, \Omega_{N_2}$ having a total of $N_1, N_2$ erasures in $[0:\infty]$ respectively, this condition can be improved to sequences with $e^{(1)}, e^{(2)}$ satisfying $\sum_{i'=i}^{i+T} e_{i'}^{(1)} \leq N_1$ and $\sum_{i'=i}^{i+T} e_{i'}^{(2)} \leq N_2$ (i.e. in any sliding window of $T+1$ consecutive time slots, the source-relay and relay-destination links see at most $N_1, N_2$ erasures respectively) by noting that messages have a delay constraint of $T$. 
\end{remark}

  \begin{definition}
    The rate of an $(n_1, n_2, k,  T)_{\mathbb{F}}$-streaming code is \begin{align*}
        R &= \frac{k}{\max(n_1, n_2)} \\
    \end{align*}
  \end{definition}

\section{Main Results}\label{sec:mainres}

The following theorem is the main result of this work. 

\begin{theorem}\label{thm:rates}
  For any $N_1, N_2$ and $T$, there exists an $(N_1, N_2)$-achievable streaming code with rate $R = \min(R_1, R_2)$ for 
  \begin{align}
      R_1 &= \frac{T+1-N_1-N_2}{T+1-N_2} \\
      R_2 &= \frac{T+1-N_2-j}{T+1-N_1 + (N_1 - j) \cdot \frac{T+1-N_2-j}{T+1-N_2-N_1} + \delta} 
  \end{align} 

  Furthermore, the minimum field size $q$ required for this code is $q = T+1-j$. 
\end{theorem}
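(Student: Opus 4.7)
The plan is to construct the code in two layers, chosen so that the source-to-relay layer matches the single-hop bound $R_1$ and the relay-to-destination layer matches $R_2$, with the overall rate being $\min(R_1,R_2)$ as usual. For the source layer I would use a standard diagonally-interleaved MDS streaming construction from $\mathbb{F}_q^{k}$ to $\mathbb{F}_q^{n_1}$ in the style of \cite{badr2013streaming}, tolerating $N_1$ erasures in any $(T+1)$-window while leaving enough slack ($N_2$ further parities per message) so that, were everything forwarded uncoded by the relay, a delay-$T$ decoder would already achieve $R_1$. The relay-to-destination layer is where the subset-adaptive idea lives: for each sliding window, the relay examines its own erasure pattern $e^{(1)}$, classifies it by the erasure count $e\in\{0,\dots,N_1\}$ into a reduced set of classes parametrized by $j$, and then applies a short MDS encoding whose length and rate are tailored to that class.

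First I would formalize the class partition. All patterns with $e\le j$ are grouped into one ``easy'' class in which the relay essentially passes through the recovered source packets inside a short MDS codeword of length $T+1-j$; the classes $e\in\{j+1,\dots,N_1\}$ each get progressively more redundancy, since the source layer has not fully absorbed those erasures and the relay must further protect $N_1-j$ additional symbols against up to $N_2$ erasures on its outgoing link. Because there are only $N_1-j+1$ distinct classes, and each of the short MDS codes has length at most $T+1-j$, all of them can be instantiated over a common field of size $q=T+1-j$ using, e.g., Reed--Solomon codes, which is exactly the field-size claim of the theorem. I would then verify $(N_1,N_2)$-achievability by case analysis over which class is active in each window: using Remark \ref{rem:1} to restrict attention to $(T+1)$-length windows, and invoking the MDS property of both layers, one shows that the destination recovers every $\msg{t}$ by time $t+T$ regardless of which admissible $(e^{(1)},e^{(2)})$ occurred.

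The rate computation then amounts to bookkeeping per window. The numerator of $R_2$ is the source-layer contribution $T+1-N_2-j$, and the denominator accumulates the baseline relay packet size $T+1-N_1$, the $(N_1-j)$ extra relay symbols needed for the worst-case class converted back to source symbols through the factor $(T+1-N_2-j)/(T+1-N_2-N_1)$, plus a small boundary-alignment slack $\delta$ accounting for how adjacent windows overlap in the streaming setting. The overall rate $\min(R_1,R_2)$ then follows by selecting $n_1,n_2,k$ so that neither layer's packet size strictly dominates. I expect the main obstacle to be in the relay-layer analysis: one must argue that the short MDS codes used in different classes can be stitched together consistently across sliding windows, so that no relay symbol is over-committed when the active class changes from one window to the next, and that this stitching can be done without inflating the field beyond $T+1-j$. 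Pinning down $\delta$ exactly — as opposed to a loose $O(1)$ bound — is what makes this the delicate step, since it is the term that quantifies the cost of giving up full adaptation in favor of the coarser subset-based partition.
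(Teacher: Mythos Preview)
Your high-level plan matches the paper's two-layer construction and correctly identifies diagonally interleaved MDS codes for the source layer, but three concrete points diverge from what the theorem actually requires.

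First, the class partition is too fine. The scheme in the paper uses exactly \emph{two} relay modes, not $N_1-j+1$: if the relay has seen at most $j$ erasures of $s_t$ by the relevant time it transmits with the ``adaptive'' code $(T+1-j,\,T+1-j-N_2)$, and otherwise it falls back to the single ``nonadaptive'' code $(T+1-N_1,\,T+1-N_1-N_2)$. There is no progressive redundancy for each $e\in\{j+1,\dots,N_1\}$; that would be precisely the fully adaptive scheme of \cite{adaptiverelay2023} that this construction is designed to avoid. The $R_2$ denominator then arises not from class-by-class bookkeeping but from a worst-case bound on the relay packet size $n_2$: one sums the contributions to $x_t^{(2)}$ from all $s_{t'}$ with $t'\in[t-T:t-j]$, separating erased from non-erased source packets, and shows the bound is maximized at $i=N_1$ erasures.

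Second, $\delta$ is not an alignment slack from overlapping windows. It is the header overhead the relay pays to tell the destination which source-to-relay erasure pattern occurred, so that the destination knows which decoder to apply; concretely $\delta=\lceil (T+1)\log_q 2\rceil$ symbols. This is not the delicate step.

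Third, and most important, the field-size claim does not follow merely from ``each short MDS code has length at most $T+1-j$.'' The difficulty is that when the relay begins in adaptive mode and then observes its $(j{+}1)$st erasure mid-message, the already-transmitted subpackets $C(t;i)$ have unequal sizes, so a naive MDS parity over them would require a long code (and hence a large field) exactly as in \cite{adaptiverelay2023}. The paper's key idea here is a \emph{grouping} argument: the $k$ symbol estimates of $s_t$ are partitioned into $T+1-N_1-N_2$ groups of size $T+1-j-N_2$, and one proves (Proposition~\ref{prop:group}) that any $w\le N_2$ erased relay packets can delete at most $w$ symbols from each fixed coordinate position across the groups. This is what allows the parities to be built from $T+1-j-N_2$ parallel copies of a short $(T+1-N_1,\,T+1-N_1-N_2)$ MDS code, and it is the step that actually pins the field size at $q=T+1-j$. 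Your proposal does not contain this mechanism, and without it the ``stitching'' concern you raise cannot be resolved at the claimed field size.
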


\begin{remark}
  The $\delta$ term in Theorem \ref{thm:rates} represents an overhead to inform the destination of the erasure pattern which occurs from source to relay. In order to keep the construction simple, we assume that the destination already has access to the erasure pattern. We present details to compute $\delta$ in Section \ref{sec:achievable}.
\end{remark}

The code construction that achieves this rate is presented in Section~\ref{sec:subset}. We then prove the theorem in three parts in Section~\ref{sec:anal}. First, we show that the code constructed by our algorithm has rate $R$. Then, we show that a field size of $q = T + 1 - j$ is sufficient to encode the packets with MDS codes. Finally, we show that this code is $(N_1, N_2)$-achievable.

This result provides some key observations that will be discussed in more detail in the upcoming sections. The first is that, comparing our results to \cite{silas2019} and \cite{adaptiverelay2023}, we find 
that most of the rate improvement can be obtained with little adaptation power from the relay, with significantly diminishing gains obtained from more adaptation. On the other hand, the cost in further increasing this adaptation power is not negligible as we will see in Section~\ref{sec:numresult}, 
thus our scheme presents a good trade-off point. We also note that the choice of which erasure pattern to adapt to, which is described by the choice of $j$ in our scheme, influences the resulting rate, with little impact on complexity, making an optimal choice desirable. Finally, the field size aspect of the result is a significant novelty of our work and requires using a fundamentally different idea than the adaptive code from \cite{adaptiverelay2023}. Specifically, we introduce a concept of grouping for encoding of the relay to destination packets which ensures that concatenations of short MDS codes are used instead of a single long MDS code as was the case in prior work. Since the field size requirement depends on the underlying code, which is now much shorter than before, we are then able to achieve the claimed amount in Theorem \ref{thm:rates}. This gain we achieve is also significant as the value of $q$ required in \cite{adaptiverelay2023} is of the order $\mathcal{O} ((T+1-N_1)^{N_1})$. 

  \section{Proposed Coding Scheme}\label{sec:subset}
  
  In this section, we present our proposed adaptive relaying scheme. 
  One key idea behind our scheme is that, rather than adapting to all possible erasure patterns that may occur in the source-to-relay link, the relay only adapts to a small subset of erasure patterns. Specifically, by selecting a value $0 \leq j < N_1$, for
  any given time $t$ and source message $s_t$, if the number of erasures observed 
  in the time window $[t : t + T - N_2]$ is less than or equal to $j$, the relay will start transmitting symbols from $s_t$ at time $t+j$ with the adaptive rate 
  and otherwise will start transmitting symbols at time $t+N_1$ with the nonadaptive rate. 
  As mentioned previously, the motivation behind adapting to a subset of the possible erasure patterns is to offer a practical code that still enjoys most of the benefits of adaptation.

  \subsection{Source-to-relay encoding}  
    
  The construction of the source-to-relay code involves an $(n', k')$ diagonally interleaved MDS code with $n' = T+1-N_2$ and $k' = T+1-N_1-N_2$ concatenated $l' = T+1-N_2-j$ times. The concatenations can be seen as multiple "layers" of the same code. We then obtain the following code parameters:
  \begin{align}
      k &= (T+1-N_1-N_2)(T+1-N_2-j) \\
      n_1 &= (T+1-N_2-j)(T+1-N_2),
  \end{align}
  
and a rate $R_1 = \frac{k}{n_1} = \frac{T+1-N_2-N_1}{T+1-N_2}$. The idea behind the use of multiple concatenations is to have $k_1$ be divisible by $T+1-N_2-N_1$ and $T+1-N_2-j$ which we will show is required later on for the relay to destination encoding. We now introduce the concept of symbol \textit{estimates} and an associated proposition from \cite{adaptiverelay2023}.
  
      \begin{definition}
          We say $\tilde{s}_i[l] \in \mathbb{F}_q$ is an estimate of a source symbol $s_i[l]$ if there exists a function $\Psi_{i, l}$ such that $\Psi_{i, l} (\tilde{s}_i[l], \{s_t\}_{t \in [0:i-1]}) = s_i[l]$. 
      \end{definition}

  \begin{proposition}[Proposition 1, \cite{adaptiverelay2023}] \label{prop:indices}
      Assume that packet $x_t^{(1)}$ is erased. Then, denote by $\mathcal{I} = \{t_1, \dots, t_{T+1-N_1-N_2} \}$ the (ordered) time indices of the first $T+1-N_1-N_2$ non-erased source encoded 
      packets after time $t$, and denote by $t_v$ the $v$th element of the
      set. Then, at time instant $t_v$, the relay has access to a set of
      estimates $\tilde{\mathcal{M}}^{v}$ for which the following properties hold:
      \begin{enumerate}
          \item $|\tilde{\mathcal{M}}^{v}| = l' v$
          \item $H(s_t | \tilde{\mathcal{M}}^{v}, \{s_i\}_{i=0}^{t-1}) \leq k - l'v$
      \end{enumerate}
  
  \end{proposition}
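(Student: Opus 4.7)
The plan is to exploit the diagonal structure of the source-to-relay code layer by layer. For each layer $l \in [1:l']$, the $k'$ symbols of $s_t$ assigned to that layer form the systematic part of an $(n', k')$ MDS codeword whose $n'$ coded symbols are placed one per packet into $\spack{t}, \spack{t+1}, \dots, \spack{t+n'-1}$. Depending on the specific diagonal interleaving, the parity symbols may additionally carry linear combinations of source symbols from times strictly before $t$; this is the only way past messages enter the picture.

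I would first establish Property 1 by a direct count. For every non-erased packet $\spack{t_v}$ with $t_v \in [t+1 : t+n'-1]$ and every layer $l$, exactly one symbol of the layer-$l$ codeword associated with $s_t$ is present in $\spack{t_v}$. Subtracting the portion that depends only on $\{s_i\}_{i=0}^{t-1}$ leaves a nontrivial linear combination of the $k'$ layer-$l$ symbols of $s_t$, which is a valid estimate by definition since $s_t[l]$ can be recovered from it once prior messages are supplied. Because the layers are encoded independently, each non-erased packet produces $l'$ such estimates, one per layer, and over $v$ non-erased packets one accumulates $|\tilde{\mathcal{M}}^v| = l' v$.

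For Property 2, I would invoke the MDS property of the codes used in each layer: any $v \leq k'$ of the $n'$ coded symbols are $v$ linearly independent functions of the $k'$ systematic symbols of that layer. Because the layers do not mix, the full set of $l' v$ accumulated linear combinations is a block-diagonal system of rank exactly $l' v$ in the $k = l' k'$ symbols of $s_t$. Conditioning on $\{s_i\}_{i=0}^{t-1}$ turns each such combination into an exact linear equation on $s_t$, so the conditional entropy of $s_t$ given $\tilde{\mathcal{M}}^v$ and all prior messages is bounded by the dimension of the residual kernel, namely $k - l' v$.

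The hard part will be to justify carefully that the subtraction of past-message contributions preserves rank and that no hidden coupling between layers is introduced by the interleaving. This reduces to checking that the coefficient submatrix of $s_t$'s symbols across the $l' v$ estimates is a block-diagonal stacking of $l'$ blocks, each being a $v \times k'$ submatrix of an MDS generator; once that is verified, additive offsets depending only on $\{s_i\}_{i<t}$ cannot affect the rank once we condition on those messages, and both properties follow.
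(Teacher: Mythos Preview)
The paper does not supply its own proof here; Proposition~\ref{prop:indices} is quoted from \cite{adaptiverelay2023} and used as a black box, so there is nothing in the present paper to compare your argument against. Judged on its own, though, your outline has a real gap relative to the source-to-relay code actually used. You assume that, per layer, the $k'$ symbols of $s_t$ form the systematic part of an $(n',k')$ codeword placed one symbol per packet, with parities possibly polluted only by \emph{past} messages; under that interleaving your rank count is correct. But the diagonal interleaving employed here is the one in which each diagonal codeword takes as its $k'$ message symbols one symbol from each of $k'$ \emph{consecutive} source messages. In Table~\ref{tab:srex2} (where $k'=2$) the parity in row~3 at time $i$ is $s_{i-2}[\cdot]+s_{i-1}[\cdot]$, so the diagonal through columns $i,\dots,i+3$ encodes the pair $(s_i[\cdot],\,s_{i+1}[\cdot])$. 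Relative to a fixed $s_t$, parities in packets after time $t$ therefore also carry contributions from $s_{t+1},\ldots,s_{t+k'-1}$. Your sentence ``this is the only way past messages enter the picture'' is false for this code, and the step ``subtracting the portion that depends only on $\{s_i\}_{i=0}^{t-1}$ leaves a nontrivial linear combination of the $k'$ layer-$l$ symbols of $s_t$'' fails: what remains can still depend on future messages and is then not an estimate in the sense of the definition.

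For instance, with $N_1=2$ and packets $t$, $t{+}1$ erased, at $t_1=t{+}2$ the only row of $\spack{t+2}$ yielding a genuine estimate is the bottom parity $s_{t-1}[\cdot]+2s_t[\cdot]$; the row $s_t[\cdot]+s_{t+1}[\cdot]$ still contains $s_{t+1}$ and cannot be cleaned using past messages alone. The proposition nonetheless holds because after further non-erased packets arrive the relay can combine symbols \emph{across} packets (e.g.\ $s_t[\cdot]+s_{t+1}[\cdot]$ from time $t{+}2$ with $s_t[\cdot]+2s_{t+1}[\cdot]$ from time $t{+}3$) to eliminate the future-message terms before forming estimates. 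This cross-packet cancellation of future interference is exactly the missing ingredient; the ``hard part'' is not merely rank preservation under past-message subtraction but first removing the contamination by $s_{t+1},\ldots,s_{t+k'-1}$.
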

  
  Based on this proposition, we know that from each nonerased packet $x_{t+i}^{(1)}$ ($1 \leq i \leq n'-1$) (up to $k'$ such packets in total), the relay obtains $l'$ symbol estimates of $s_t$. Moreover, if $x_t^{(1)}$ is not erased, the relay obtains all symbols from $s_t$ at time $t$ since the code is systematic.

  \subsection{Relay-to-destination encoding}\label{sec:subsetrdenc}
  
  For the relay-to-destination link, we present our adaptive code construction based on the use of symbol estimates mentioned in the previous section. We have two possibilities for the encoding depending on whether $x_t^{(1)}$ is erased. 
  
  \subsubsection{$x_t^{(1)}$ is not erased}
  
  If $x_t^{(1)}$ is not erased, all of the symbols from $s_t$ are available at time $t$ and the relay uses a $(n'', k'')$ diagonally interleaved MDS code with $k'' = T+1-N_2-j$ and $n'' = T+1-j$. This code is concatenated $l'' = \frac{k}{k''} = T+1-N_1-N_2$ times. The diagonally interleaved code is spread out across the relay packets $\rpack{t+j}, \dots, \rpack{t+T}$.

  \subsubsection{$x_t^{(1)}$ is erased}
  
  In this case, the relay does not receive all symbols from $s_t$ at time $t$ and the relay encoding will depend on the number of erasures which happen following time $t$. We start by defining some additional terminology. We will let $\mathbf{C}(t) = [C(t; 1) \; C(t; 2) \; \dots \; C(t; T)]$ be the symbols, originally coming from $s_t$, that are transmitted by the relay in the time window $[t+1:t+T]$. 
      Here, the $C(t; i)$ will then consist of symbols sent at time $t+i$ (to be seen as a row vector) and these will naturally be a function of the nonerased source packets in the interval $[0:t+i]$ which contain symbols from $s_t$.
  We further let $\alpha_t(t+i) = |C(t; i)|$. Based on the source-to-relay encoding, we always have $0 \leq \alpha_t(t+i) \leq l'$. Similarly, we denote the cumulative number of symbol \textit{estimates} of code symbols from $s_t$ available to the relay at time $t+i$ by $\kappa_t(t+i)$. This value can be calculated exactly since each nonerased timeslot leads to an additional $T+1-N_2-j$ symbol estimates being available. Therefore, multiplying by the number of nonerased times from $t$ to $t+i$ and taking the maximum with $k$ gives the desired value of $\kappa_t(t+i)$. Lastly, we let the cumulative number of erasures in the interval $[t+1: t+i-1]$ be represented by $\gamma(i)$. 
  
  The coding strategy is then based on including $C(t;i)$ as a subpacket in the relay packet $x_{t+i}^{(2)}$. Here, only values of $i$ in $[j:T]$ are relevant as the earliest the relay starts transmitting is at time $t+j$. Therefore by default, for $i \in [0, j-1]$, we let $C(t; i)$ be an empty vector so that $\alpha_t(t+i) = 0$. Moreover, from $i = T-N_2 + 1$ to $i  =T$, $C(t; i)$ will consist of parity symbols to ensure that the code is robust to any $N_2$ erasures which may occur from relay to destination. We now present Algorithm \ref{alg:subset} which provides a way to calculate the number of symbols to include at time $t+i$, i.e. $\alpha_t(t+i)$, for $i \in [j, T-N_2]$.

  \begin{algorithm}[H]
      \caption{Computation of $\alpha_t(t+i)$ for $i \in [j:T-N_2]$}\label{alg:subset}
      \begin{algorithmic}[1]
      \STATE $i \gets j$
          \WHILE {$i \leq T - N_2$}
          \STATE $\gamma(i) \gets \text{number of erasures in time slots} \; [t+1:t+i-1]$
          \STATE $\kappa_t(t+i) \gets \max(k, (T+1-N_2-j) \cdot (i-1-\gamma(i)))$
          \IF {($\gamma(i) \leq j-1$)}
          \STATE $\ell_i \gets T+1-N_2-N_1$
          \ELSIF {($\gamma(i) \geq j$ \textbf{and} $i \geq N_1$)}
          \STATE $\ell_i \gets T+1-N_2-j$ 
          \ELSE 
          \STATE $\ell_i \gets 0$
          \ENDIF
          \STATE available $\gets \kappa_t(t+i) - \sum_{a \in [0:i-1]} \alpha_t(t+a) $
          \STATE $\alpha_t(t+i) \gets \min \{\ell_i, \text{available} \}$
          \STATE $i \gets i +1$
          \ENDWHILE
      \end{algorithmic}
      \end{algorithm}

We note here that because there are only two possible rates of transmission, we have $\ell_0 = \dots = \ell_{j-1} = \frac{k}{T+1-N_2-j} = T+1-N_1-N_2$ and $\ell_j = \dots = \ell_{N_1-1} = \frac{k}{T+1-N_2-N_1} = T+1-N_2-j$. 

\subsubsection{Construction of Parities}\label{sec:paritydesc}

We now present the details behind how the parity subpackets i.e. $C(t; i)$ for $T-N_2 + 1 \leq i \leq T$ are constructed. The main idea is to have the relay construct the parities according to a $(T+1-j, T+1-j-N_2)$ or a $(T+1-N_1, T+1-N_1-N_2)$ diagonal MDS code depending on if $\gamma(T-N_2)$ is $\leq j-1$ or $\geq j$ respectively. 
 We therefore treat the overall code as a diagonal MDS code with these parameters even if the $C(t; i)$ for $i \leq T-N_2$ are not of a consistent size.  

In the case where $x_t^{(1)}$ is not erased, all of the $C(t; i)$'s have a constant size and so the construction of the parity subpackets follows trivially from the standard construction of MDS codes which involve linearly independent combinations of the code symbols. Specifically, a $(n'', k'')$ MDS code and its concatenations will suffice. Similarly, if $j = 0$ and $x_t^{(1)}$ is erased, the relay only starts to transmit symbols from time $i = t + N_1$ onwards from where the size of the $C(t; i)$'s remains constant and thus the parity subpackets can once again be constructed in a standard way.

The details which follow are thus focused on the case where $x_t^{(1)}$ is erased and $ j > 0$. In this case the sizes of the $C(t; i)$ may change and be smaller than $\ell_{\gamma(i)}$ if not enough symbols are available, as shown in Algorithm \ref{alg:subset}. \cite{adaptiverelay2023} dealt with this case by using long MDS codes but here we will show that concatenations of a $(T+1-j, T+1-j-N_2)$ or $(T+1-N_1, T+1-N_1-N_2)$ code are sufficient. This is another advantage of our approach in that it allows for a low field size to be used.

 If we have $\gamma(T-N_2) \leq j - 1$, we know that the relay will always have had enough symbols to transmit since for any $i$, at time $t+i$ the relay has transmitted $i - j + 1$ packets but has received at least $i+1 - (j-1) \geq i-j+1$ packets. Thus, using a $(T+1-j, T+1-j-N_2)$ code here also follows directly as all of the $C(t; i)$'s have equal size and the parity symbols can be generated directly as linear combinations of these. 

  However, if we have $\gamma(T-N_2) \geq j$ and the relay had started transmitting at time $t +j$, then it must have observed erasures afterwards causing $\gamma(i) = j$ for some $i$ i.e. a switch in the rate used by the relay. This process can be referred to as a within-message variable rate. To formalize how the relay proceeds in such a scenario, we define two relaying subfunctions, $g_{t+i, \text{adapt}}$ and $g_{t+i, \text{nonadapt}}$ which the overall relaying function $g_{t+i}$ (Definition \ref{def:streamingcode}) is composed of. 
  
  \begin{align*}
      g_{t+i, \text{adapt}}: \mathbb{F}_q^{n_1} \cup \{*\} \times \dots \times \mathbb{F}_q^{n_1}  \cup \{*\} \to \mathbb{F}_q^{\ell_0} \\
      g_{t+i, \text{nonadapt}}: \mathbb{F}_q^{n_1} \cup \{*\} \times \dots \times \mathbb{F}_q^{n_1} \cup \{*\} \to \mathbb{F}_q^{\ell_j} \\
  \end{align*}

  These subfunctions represent how the symbol estimates of time $t$ are used to generate the corresponding contribution to relay packets $\rpack{t+i}$, in the form of its subpacket which consists of the symbols from time $t$.  The adaptive subfunction $g_{t+i, \text{adapt}}$ is used if $\gamma(i) \leq j-1$ and the nonadaptive subfunction is used otherwise. 
  
  We now consider $s_t$ and note that for each nonerased source to relay packet $x_t^{(1)}, \dots, x_{t+T-N_2}^{(1)}$, the relay receives $l' = \frac{k}{T+1-N_1-N_2} = T+1-j-N_2$ symbol estimates of $s_t$ (except for time $t$ where it receives all $k$ symbols). Now assume that the relay begins transmitting with the rate corresponding to $j$ erasures at time $t+j$. At this rate, the relay will transmit $\ell_0 = T+1-N_1-N_2$ symbols per time slot and the adaptive relaying function $g_{t+i, \text{adapt}}$ is used. Then assume that at time $t+j+i$, $i \geq 1$, the relay observes its $j+1$st erasure from source to relay and thus changes its rate of transmission to the one corresponding to $N_1$ erasures. Here, $\ell_{j} = T+1-j-N_2$ symbols are sent per timeslot and now the nonadaptive relaying function is used. Now the relay will wait until time $t + N_1$ if $i < N_1$ and from $t+N_1$ onwards, start transmitting $\min(T+1-j-N_2, \textrm{available})$ symbols per timeslot. This will continue until time $t+T-N_2$. Following this, it will generate parity symbols to send until time $t+T$. 
  
 To generate the parity symbols, the relay will continue to use the nonadaptive subfunction. Specifically, it will first divide source symbol estimates from time $t$ into $T+1-N_1-N_2$ groups, each of size $T+1-j-N_2$. Let these groups be denoted by $g_{r, t} = \tilde{s}_t[1+(r-1)(T+1-j-N_2):r(T+1-j-N_2)]$ for $1 \leq r \leq T+1-N_1-N_2$. We further assume WLOG that the estimates $\tilde{s}_t[p]$ are arranged according to the order in which they are transmitted initially by the relay i.e. $\tilde{s}_t[p]$ is transmitted at an earlier or equal time as $\tilde{s}_t[q]$ if and only if $p < q$. It then follows that the groups $g_r$ are also arranged by the times at which the relay transmits the symbols within. The parity subpackets are then generated based on linear combinations of the $g_r$ i.e. assuming the previous relay subpackets were the $g_r$'s and were sent nonadaptively. Overall, this corresponds to using a $(T+1-N_1, T+1-N_1-N_2)$ MDS code concatenated $|g_{r, t}| = T+1-j-N_2$ times. In other words, for each value of $p \in [1:T+1-j-N_2]$, the relay uses a $(T+1-N_1, T+1-N_1-N_2)$ MDS code with $g_{1, t}[p], \dots, g_{T+1-N_1-N_2, t}[p]$ as the message symbols. We prove the correctness of this idea in Section \ref{sec:fieldsize}.
  
  \subsection{Examples}

  In this section, we present two detailed examples to illustrate the main concepts involved in our scheme. 
  
  For the first, we let the system parameters be $T = 5, N_1 = 2$ and $N_2 = 3$. Then with $j = 0$, we have $k' = T+1-N_1-N_2 = 1$ and $n' = T+1-N_2 = 3$ with $l ' = T+1-N_2 - j = 3$ concatenations giving $k = 3$ and $n_1 = 9$. The source-to-relay code is shown in Table \ref{tab:srex1}. We represent the concatenations of the code by having the source symbols at time $i$ be $s_i[1], s_i[2], s_i[3]$. Each of these symbols is present in a diagonal $(3,1)$ MDS code. 
  
  We then assume that there is a burst erasure of length 2 at times 1 and 2. Based on this erasure pattern, the relay to destination code will be as in Table \ref{tab:rdex1}. Since there are no erasures at times 0, 3, 4, 5, symbols $s_0, s_3, s_4, s_5$ are transmitted adaptively with a $(6, 3)$ diagonal MDS code while $s_1$ and $s_2$ are transmitted nonadaptively with a lower rate $(4,1)$ code. Since $j = 0$, any erased source packet will necessarily have the corresponding symbols sent with a nonadaptive rate while any nonerased times will use an adaptive rate. Overall in this example, we then obtain a rate of 0.3, higher than the nonadaptive rate of 0.25 and slightly lower than the fully adaptive rate of 0.33. Meanwhile, the field size required is 6, which is lower than a requirement of 18 in the fully adaptive case.
  
  \begin{table*}[]
      \caption{Source-to-relay Code, $T=5, N_1=2, N_2=3, j=0$}
      \label{tab:srex1}
      \centering
      \resizebox{1.3\columnwidth}{!}{%
      \begin{tabular}{|l|l|l|l|l|l|l|l}
      \cline{1-7}
      Time $i$ & 0     & 1     & 2     & 3     & 4     & 5     \\ \cline{1-7}
      $s_i[1:3]$    & {\color[HTML]{3166FF} $s_0[1:3]$} & \cellcolor[HTML]{9B9B9B} {\color[HTML]{FE0000} $s_1[1:3]$} &  \cellcolor[HTML]{9B9B9B}{\color[HTML]{FFCB2F} $s_2[1:3]$} & {\color[HTML]{32CB00} $s_3[1:3]$} & $s_4[1:3]$ & $s_5[1:3]$  & 3    \\ \cmidrule[3pt]{1-7} 
        $s_{i-1}[1:3]$   &       & \cellcolor[HTML]{9B9B9B}{\color[HTML]{3166FF} $s_0[1:3]$} & \cellcolor[HTML]{9B9B9B}{\color[HTML]{FE0000} $s_1[1:3]$} &  {\color[HTML]{FFCB2F} $s_2[1:3]$} & {\color[HTML]{32CB00} $s_3[1:3]$} & $s_4[1:3]$  & 3    \\ \cline{1-7}
        $s_{i-2}[1:3]$   &       &  \cellcolor[HTML]{9B9B9B}     & \cellcolor[HTML]{9B9B9B}{\color[HTML]{3166FF} $s_0[1:3]$} & {\color[HTML]{FE0000} $s_1[1:3]$} &  {\color[HTML]{FFCB2F} $s_2[1:3]$} &  {\color[HTML]{32CB00} $s_3[1:3]$}  & 3 \\ \cline{1-7}
      \end{tabular}
      }
  \end{table*}
  
  \begin{table*}
      \caption{Relay-to-destination Code, $T=5, N_1=2, N_2=3, j=0$}
      \label{tab:rdex1}
      \centering
  \begin{tabular}{|l|l|l|l|l|l|l|l|l|l}
      \cline{1-9}
      Time $i$ & 0     & 1     & 2     & 3     & 4     & 5 &  6  & 7 \\ \cline{1-9}
              $s_{i-2}[1:3]$    & & &   & {\color[HTML]{FE0000} $s_1[1:3]$}  &  {\color[HTML]{FFCB2F} $s_2[1:3]$}  &   &      &  & 3   \\ \cmidrule[3pt]{1-9}
              $p_{i-3, 1}[1:3]$    & & &   &   & {\color[HTML]{FE0000}$s_1[1:3]$}  &  {\color[HTML]{FFCB2F} $s_2[1:3]$}  &      &  & 3  \\ \cline{1-9}
              $p_{i-4, 2}[1:3]$    & & &   &   &   &  {\color[HTML]{FE0000}$s_1[1:3]$} &    {\color[HTML]{FFCB2F} $s_2[1:3]$}   &  & 3   \\ \cline{1-9}
              $p_{i-5,3 }[1:3]$   & & &   &   &   &   &  {\color[HTML]{FE0000}$s_1[1:3]$}     &  {\color[HTML]{FFCB2F} $s_2[1:3]$} & 3   \\ \cmidrule[5pt]{1-9}
                      $s_{i}[1]$    & {\color[HTML]{3166FF} $s_0[1]$} &  &   &  {\color[HTML]{32CB00} $s_3[1]$}  & $s_4[1]$  & $s_5[1]$  &      &  & 1  \\  \cline{1-9}
              $s_{i-1}[2]$    & &{\color[HTML]{3166FF} $s_0[2]$} &   &    & {\color[HTML]{32CB00} $s_3[2]$}  & $s_4[2]$  &   $s_5[2]$   &  & 1  \\ \cline{1-9}
              $s_{i-2}[3]$    & & & {\color[HTML]{3166FF} $s_0[3]$}  &    &    & {\color[HTML]{32CB00} $s_3[3]$}  &   $s_4[3]$   & $s_5[3]$ & 1  \\ \cmidrule [3pt]{1-9}
              \rule{0pt}{0.5cm} $p_{i-3, 1}[1]$   & & &   &  \pbox{10cm}{\color[HTML]{3166FF} $s_0[1]$ \\ $ + s_0[2]$ \\ $+ s_0[3]$}  &    &  & \pbox{10cm}{\color[HTML]{32CB00} $s_3[1]$ \\ $ + s_3[2]$ \\ $+ s_3[3]$}     & \pbox{10cm}{  $s_4[1]$ \\ $ + s_4[2]$ \\ $+ s_4[3]$} & 1   \\ \cline{1-9}
              $p_{i-4, 2}[1]$    & & &   &   & \pbox{10cm}{\color[HTML]{3166FF} $s_0[1]$ \\ $ + 2s_0[2]$ \\ $+ 3s_0[3]$} &    &       & \pbox{10cm}{\color[HTML]{32CB00} $s_3[1]$ \\ $ + 2s_3[2]$ \\ $+ 3s_3[3]$} & 1  \\ \cline{1-9}
              $p_{i-5, 3}[1]$   & & &   &   &   & \pbox{10cm}{\color[HTML]{3166FF} $s_0[1]$ \\ $ + 3s_0[2]$ \\ $+ 4s_0[3]$}  &     &    & 1 \\ \cline{1-9}
          
      \end{tabular}
      \end{table*}

  As the second example, we consider the parameters $T = 6, N_1 = 2, N_2 = 3$ and $j = 1$, and assume that there are isolated erasures at $t = 4$ and $t=6$. We show the resulting source-to-relay and relay-to-destination code in Tables \ref{tab:srex2} and \ref{tab:rdex2} respectively. In this case, the nonerased packets are again sent adaptively, however due to the value of $j$ being 1, there is a delay of 1 timeslot. Meanwhile, $s_4$ is initially erased but then since the packet at time 5 is not erased, the relay has enough symbols to start transmission with the adaptive rate. Note that the relay observes erasures causally and thus cannot know beforehand whether it will be able to continue transmitting all of the symbols at this rate. 
  
 In this case, since there is an erasure at time 6 which leads to $2 > 1$ erasures affecting the packet from time 4, transmission cannot continue at the adaptive rate. Thus the remaining symbol (out of the 3 total) received by the relay at time 5 is transmitted at time 6 and from time 7 onwards the nonadaptive rate is used where the relay sends 3 symbols per timeslot. After time 7, all of the source symbols have been transmitted and so the packets will start to consist of parity symbols. Here, the concept of grouping is used, as we have described in Section \ref{sec:paritydesc}. In this example, this will mean that although the diagonal code used in the transmission had uneven packet sizes (2, 1, 3 at times 5, 6, 7 respectively), the encoding and decoding can be done assuming that 3 concatenations of a $(5, 2)$ code are used. Specifically, we will have $p_{4, 1}[1] = s_4[2] + s_4[1]$, $p_{4, 1} [2] = s_4[4] + s_4[3]$ and $p_{4, 1}[3] = s_4[6] + s_4[5]$. Similarly, $p_{4, 2}$ and $p_{4, 3}$ will continue with further linear combinations e.g. $p_{4,2}[1] = s_4[2] + 2s_4[1]$, $p_{4, 3} [1] = s_4[2] + 3 s_4[1]$. Then the destination will also decode symbols according to this MDS code. The first concatenation will recover symbols $s_4[2], s_4[1]$, the second will recover $s_4[4], s_4[3]$ and the third will recover $s_4[6],s_4[5]$. It can be verified that any three erasures in $[5:10]$ can cause at most three symbols in each individual $(5, 2)$ code to be erased. We show this process visually in Tables \ref{tab:ex2group1} and \ref{tab:ex2group2}. Table \ref{tab:ex2group1} shows the original transmission of packets by the relay and the two groups formed for encoding in green and orange. Table \ref{tab:ex2group2} shows the case where a $(5, 2)$ code was used from the beginning to illustrate how this grouping leads to the parity packets being constructed. As we can see, at most 3 symbols in each row (concatenation) of the code in Table \ref{tab:ex2group2} can be erased given any 3 packet erasures in the original transmission. We prove that this holds in general in Section \ref{sec:fieldsize}. We further note in this example that with the grouping we use 3 concatenations of a (5,2) MDS code instead of a (15, 6) long MDS code that would be used if the strategy for adaptation from prior work is used.

 Finally, the symbols from time 6 have only one erasure and so the transmission is again done using the adaptive rate. Therefore, to summarize this example we see that only the packet from time 4 is sent with a nonadaptive rate. This demonstrates intuitively why considering values of $j  > 0$ can be beneficial; while the adaptive rate is lower than the adaptive rate when $j = 0$, there can now be more source packets that use the adaptive rate. The value of the rate obtained in this example is 0.46, compared to a rate of 0.48 with the fully adaptive code and a rate of 0.4 in the nonadaptive case. Meanwhile, the field size required is 6, which is significantly lower than the field size of 96 required in the fully adaptive case.

    \begin{table*}[t]
      \caption{Source-to-relay code, $T =6, N_1 = 2, N_2 = 3, j=1$}
      \label{tab:srex2}
      \centering
      \resizebox{2\columnwidth}{!}{%
      \begin{tabular}{|l|l|l|l|l|l|l}
      \cline{1-6}
      Time $i$ & 3                                 & 4                                 & 5                                 & 6                                 & 7                                                                \\ \cline{1-6}
       $s_i[1:2:6]$    & \color[HTML]{3166FF} $s_3[1:2:6] $                   & \cellcolor[HTML]{9B9B9B}  \color[HTML]{FE0000} $s_4[1:2:6]  $              & \color[HTML]{FFCB2F} $s_5[1:2:6]  $       & \cellcolor[HTML]{9B9B9B} \color[HTML]{32CB00}  $s_6[1:2:6]  $         & $s_7[1:2:6]  $                   &    3 \\ \cline{1-6}
       $s_i[2:2:6]$   & \color[HTML]{3166FF} $s_3[2:2:6] $                   &\cellcolor[HTML]{9B9B9B}   \color[HTML]{FE0000} $s_4[2:2:6] $                    & \color[HTML]{FFCB2F} $s_5[2:2:6] $               & \cellcolor[HTML]{9B9B9B}  \color[HTML]{32CB00} $s_6[2:2:6] $      & $s_7[2:2:6]$   &   3 \\ \cline{1-6}
       \rule{0pt}{0.5cm} \pbox{10cm}{$s_{i-2}[1:2:6]$ \\ $+ s_{i-1}[2:2:6]$}    & \pbox{10cm}{$s_{1}[1:2:6]$ \\ $+ s_{2}[2:2:6]$}   & \cellcolor[HTML]{9B9B9B} \pbox{10cm}{$s_{2}[1:2:6]$ \\ $+$ \color[HTML]{3166FF}{$s_{3}[2:2:6]$}}   & \pbox{10cm}{\color[HTML]{3166FF}{$s_3[1:2:6]$} \\ \color[HTML]{000000}{$+$} \color[HTML]{FE0000}{$s_4[2:2:6]$}}       & \cellcolor[HTML]{9B9B9B}  \pbox{10cm}{\color[HTML]{FE0000}{$s_4[1:2:6]$} \\ \color[HTML]{000000}{$+$} \color[HTML]{FFCB2F}{$s_5[2:2:6]$}}  & \pbox{10cm}{\color[HTML]{FFCB2F}{$s_5[1:2:6]$} \\ \color[HTML]{000000}{$+$} \color[HTML]{32CB00}{$s_6[2:2:6]$}}   &   3 \\ \cline{1-6}
       \rule{0pt}{0.5cm} \pbox{10cm}{$s_{i-3}[1:2:6]$ \\ $+ 2s_{i-2}[2:2:6]$}    & \pbox{10cm}{$s_{0}[1:2:6]$ \\ $+ 2s_{1}[2:2:6]$} & \cellcolor[HTML]{9B9B9B}  \pbox{10cm}{$s_{1}[1:2:6]$ \\ $+ 2s_{2}[2:2:6]$} & \pbox{10cm}{$s_{2}[1:2:6]$ \\ $+$ \color[HTML]{3166FF}{$2s_{3}[2:2:6]$}}  & \cellcolor[HTML]{9B9B9B} \pbox{10cm}{\color[HTML]{3166FF}{$s_3[1:2:6]$} \\ \color[HTML]{000000}{$+$} \color[HTML]{FE0000}{$2s_4[2:2:6]$}} & \pbox{10cm}{\color[HTML]{FE0000}{$s_4[1:2:6]$} \\ \color[HTML]{000000}{$+$} \color[HTML]{FFCB2F}{$2s_5[2:2:6]$}} &    3 \\ \cline{1-6}
      \end{tabular}
      }
  \end{table*}

  \begin{table*}[t]
      \caption{Relay-to-destination code, $T =6, N_1 = 2, N_2 = 3, j=1$} 
      \label{tab:rdex2}
      \centering
      \resizebox{2\columnwidth}{!}{%
      \begin{tabular}{|l|l|l|l|l|l|l|l|l|l|l}
      \cline{1-10}
      Time $i$ & 3              & 4              & 5              & 6                             & 7              & 8              & 9              & 10              & 11              &   \\ \cline{1-10}
                                 & $s_2[1:3:6]$ &  \color[HTML]{3166FF} $s_3[1:3:6]$   &                & \color[HTML]{FFCB2F} $s_5[1:3:6]$               & \color[HTML]{32CB00} $s_6[1:3:6]$ &  & &  &  & 2 \\ \cline{1-10}
                                 & $s_1[2:3:6]$ & $s_2[2:3:6]$ &  \color[HTML]{3166FF} $s_3[2:3:6]$             &                               & \color[HTML]{FFCB2F} $s_5[2:3:6]$               & \color[HTML]{32CB00}   $s_6[2:3:6]$              &                &                 &                 & 2 \\ \cline{1-10}
                                 & $s_0[3:3:6]$ & $s_1[3:3:6]$ & $s_2[3:3:6]$ &    \color[HTML]{3166FF} $s_3[3:3:6]$                &           &   \color[HTML]{FFCB2F} $s_5[3:3:6]$             &   \color[HTML]{32CB00}   $s_6[3:3:6]$            &                &                 &                  2 \\ \cline{1-10}
                           \rule{0pt}{0.5cm}   &                & \pbox{10cm}{$s_{0}[1:3:6]$ \\ $+ s_0[2:3:6]$ \\ $+s_0[3:3:6]$}     & \pbox{10cm}{$s_{1}[1:3:6]$ \\ $+ s_1[2:3:6]$ \\ $+s_1[3:3:6]$}      & \pbox{10cm}{$s_{2}[1:3:6]$ \\ $+ s_2[2:3:6]$ \\ $+s_2[3:3:6]$}                & \pbox{10cm}{\color[HTML]{3166FF} $s_{3}[1:3:6]$ \\ $+ s_3[2:3:6]$ \\ $+s_3[3:3:6]$}    &                &  \pbox{10cm}{\color[HTML]{FFCB2F} $s_{5}[1:3:6]$ \\ $+ s_5[2:3:6]$ \\ $+s_5[3:3:6]$}       &  \pbox{10cm}{\color[HTML]{32CB00} $s_{6}[1:3:6]$ \\ $+ s_6[2:3:6]$ \\ $+s_6[3:3:6]$}      &                 & 2 \\ \cline{1-10}
                           \rule{0pt}{0.5cm}   &            &    & \pbox{10cm}{$s_{0}[1:3:6]$ \\ $+2 s_0[2:3:6]$ \\ $+3s_0[3:3:6]$}     & \pbox{10cm}{$s_{1}[1:3:6]$ \\ $+ 2s_1[2:3:6]$ \\ $+3s_1[3:3:6]$}      & \pbox{10cm}{$s_{2}[1:3:6]$ \\ $+ 2s_2[2:3:6]$ \\ $+3s_2[3:3:6]$}                & \pbox{10cm}{\color[HTML]{3166FF} $s_{3}[1:3:6]$ \\ $+ 2s_3[2:3:6]$ \\ $+3s_3[3:3:6]$}    &                &  \pbox{10cm}{\color[HTML]{FFCB2F} $s_{5}[1:3:6]$ \\ $+2s_5[2:3:6]$ \\ $+3s_5[3:3:6]$}       &  \pbox{10cm}{\color[HTML]{32CB00} $s_{6}[1:3:6]$ \\ $+ 2s_6[2:3:6]$ \\ $+3s_6[3:3:6]$}      &              2 \\ \cline{1-10}      
                           \rule{0pt}{0.5cm}   &        &    &    & \pbox{10cm}{$s_{0}[1:3:6]$ \\ $+3 s_0[2:3:6]$ \\ $+4s_0[3:3:6]$}     & \pbox{10cm}{$s_{1}[1:3:6]$ \\ $+ 3s_1[2:3:6]$ \\ $+4s_1[3:3:6]$}      & \pbox{10cm}{$s_{2}[1:3:6]$ \\ $+ 3s_2[2:3:6]$ \\ $+4s_2[3:3:6]$}                & \pbox{10cm}{\color[HTML]{3166FF} $s_{3}[1:3:6]$ \\ $+ 3s_3[2:3:6]$ \\ $+4s_3[3:3:6]$}    &                &  \pbox{10cm}{\color[HTML]{FFCB2F} $s_{5}[1:3:6]$ \\ $+3s_5[2:3:6]$ \\ $+4s_5[3:3:6]$}       &       2 \\ \cline{1-10}
                                 &                &                & \color[HTML]{FE0000} $s_4[2:2:4]$              &                 &  &        &         &         &                 & 2 \\ \cline{1-10}
                                 &                &                &             &  \color[HTML]{FE0000} $s_4[6]$            &   &         &         &        &                 & 1 \\ \cline{1-10}
                                 \rule{0pt}{0.5cm}    &                &                &               &                 &  \color[HTML]{FE0000} $s_4[1:2:6]$ & \pbox{10cm}{ \color[HTML]{FE0000} $s_4[2] + s_4[1]$ \\ $\frown s_4[4] + s_4[3]$ \\  $\frown s_4[6] + s_4[5]$}    &  \pbox{10cm}{ \color[HTML]{FE0000} $s_4[2] + 2s_4[1]$ \\ $\frown s_4[4] + 2s_4[3]$ \\  $\frown s_4[6] + 2s_4[5]$}    &  \pbox{10cm}{ \color[HTML]{FE0000} $s_4[2] + 3s_4[1]$ \\ $\frown s_4[4] + 3s_4[3]$ \\  $\frown s_4[6] +3 s_4[5]$}   &                 & 3 \\ \cline{1-10}
      \end{tabular}
      }
      \end{table*}

            \begin{table*}[]
        \caption{Original Transmission for symbols from time 4}
        \label{tab:ex2group1}
        \centering
        \resizebox{\columnwidth}{!}{%
        \begin{tabular}{|l|l|l|l|l|l|}
          \hline
          5                                & 6                                & 7                                & 8            & 9            & 10           \\ \hline
          \cellcolor[HTML]{67FD9A}$s_4[2]$ & \cellcolor[HTML]{67FD9A}$s_4[6]$ & \cellcolor[HTML]{FE996B}$s_4[1]$ & $p_{4,1}[1] $ & $p_{4,2}[1]$ & $p_{4,3}[1]$ \\ \hline
          \cellcolor[HTML]{67FD9A}$s_4[4]$ &                                  & \cellcolor[HTML]{FE996B}$s_4[3]$ & $p_{4,1}[2]$ & $p_{4,2}[2]$ & $p_{4,3}[2]$ \\ \hline
                                           &                                  & \cellcolor[HTML]{FE996B}$s_4[5]$ & $p_{4,1}[3]$ & $p_{4,2}[3]$ & $p_{4,3}[3]$ \\ \hline
          \end{tabular}
        }
        \end{table*}

        \begin{table*}[]
          \caption{Transmission with grouping of $(5,2)$ code}
          \label{tab:ex2group2}
          \centering
          \resizebox{\columnwidth}{!}{%
          \begin{tabular}{|l|l|l|l|l|}
          \hline
          6                                & 7                                & 8            & 9            & 10           \\ \hline
          \rule{0pt}{0.3cm}  \cellcolor[HTML]{67FD9A}$s_4[2]$ & \cellcolor[HTML]{FE996B}$s_4[1]$ & \pbox{10cm}{$p_{4,1}[1] =$ \\ $ s_4[2] + s_4[1]$}   & \pbox{10cm}{$p_{4,2}[1] =$ \\ $ s_4[2] + 2s_4[1]$}  & \pbox{10cm}{$p_{4,3}[1] =$ \\ $ s_4[2] + 3s_4[1]$} \\ \hline
          \rule{0pt}{0.3cm} \cellcolor[HTML]{67FD9A}$s_4[4]$ & \cellcolor[HTML]{FE996B}$s_4[3]$ & \pbox{10cm}{$p_{4,1}[2] =$ \\ $ s_4[4] + s_4[3]$}   & \pbox{10cm}{$p_{4,2}[2] =$ \\ $ s_4[4] + 2s_4[3]$}  & \pbox{10cm}{$p_{4,3}[2] =$ \\ $ s_4[4] + 3s_4[3]$} \\ \hline
          \cellcolor[HTML]{67FD9A}$s_4[6]$ & \cellcolor[HTML]{FE996B}$s_4[5]$ & \pbox{10cm}{$p_{4,1}[3] =$ \\ $ s_4[6] + s_4[5]$}   & \pbox{10cm}{$p_{4,2}[3] =$ \\ $ s_4[6] + 2s_4[5]$}  & \pbox{10cm}{$p_{4,3}[3] =$ \\ $ s_4[6] + 3s_4[5]$} \\ \hline
          \end{tabular}
          }
          \end{table*}

    \section{Analysis}\label{sec:anal}

    In this section, we present a complete proof for our main result, Theorem \ref{thm:rates}. The main novelty is in the method and corresponding analysis to obtain the said field size requirements (\ref{sec:fieldsize}) which then leads to a trivial proof of $(N_1, N_2)$-achievability of the scheme (\ref{sec:achievable}).
 
  \subsection{Worst Case Length of Relay Packets}\label{sec:worstcase}
  
  We calculate a general expression for the relay packet size $n_2$ when 
  the erasure patterns we adapt to are the ones with less than or equal to $j$ erasures. 
  Specifically, we are interested in the worst case value so that the relay can deal with any valid erasure pattern.
  Let us analyze the packet size of an arbitrary packet $x_t^{(2)}$. Note that, in general, information from packets $\{s_{t'}\}_{t' = t - T}^{t}$ will contribute to packet $x_t^{(2)}$. However, since we only adapt to at least $j$ erasures, the latest packet that may contribute to $x_t^{(2)}$ is $s_{t - j}$, rather than $s_{t}$. We now wish to analyze how much each packet $\{s_{t'}\}_{t' = t - T}^{t - j}$ contributes to the packet size of $x_{t}^{(2)}$. We assume that $i \leq N_1$ erasures happen at the time slots $\{ \tau_1, \dots, \tau_{i} \} \subseteq [t-T:t]$ (here we still consider the full time window from $t-T$ to $t$ as there can be at most $N_1$ erasures in a window of size $T+1$ by Remark \ref{rem:1} in the system model). 
  By Algorithm \ref{alg:subset}, we know that $\alpha_t(t+i) \leq \ell_{\gamma(i)}$. Since $i-i'$ erasures happen between time $\tau_{i'} + 1$ and time $t$, we can then obtain $\alpha_{\tau_i'} (t - \tau_i') \leq \ell_{i - i'}$, $i' \in [1:i]$. Recall that $\alpha_{\tau_i'}(t-\tau_i')$ represents the contribution of packet $s_{\tau_{i'}}$ to the relay packet $x_t^{(2)}$.

  We can then bound the packet length of $x_t^{(2)}$ as 
  
  \begin{equation}
      \tilde{n} \leq \underbrace{(T+1-i-j) \cdot \frac{k}{T+1-N_2-j}}_\text{contribution of non-erased packets} + \underbrace{\sum_{i'' \in [0:i-1]} \ell_{i''}}_\text{contribution of erased packets}.
  \end{equation}
  
  Here $T+1-j$ is the total length of the time window $[t-T:t-j]$ so there are at least $(T+1-i-j)$ non-erased packets
  , each of which is transmitted at a rate of $\frac{T+1-N_2-j}{T+1-j}$ with $l'' = \frac{k}{T+1-N_2-j}$ symbols per timeslot. Note that for the second term, we have let $i'' = i-i'$, 
  where this quantity represents the number of erasures in the time window $[\tau_{i'}+1, t]$ i.e., the number of erasures that affect packet $s_{\tau_{i'}}$, which then defines the number of symbols that are transmitted from that packet at time $t$. 
  We now simplify the second term as 
  
  \begin{align}
      \begin{split}
      \sum_{i'' \in [0: \min(i-1, j-1)]} \frac{k}{T+1-N_2-j} \\ + \sum_{i'' \in [\min(i, j):i-1]} \frac{k}{T+1-N_2-N_1}
      \end{split} \\
      &= \frac{k \cdot (\min(i, j))}{T +1 - N_2 - j } + \frac{k \cdot (i - \min(i, j))}{T +1 - N_2 - N_1}
  \end{align}
  by the definition of $\ell_{\gamma(i)}$. Specifically, for $i'' \leq j-1$, $\ell_{i''} = \frac{k}{T+1-N_2-j}$ and for $i'' > j$, $\ell_{i''} = \frac{k}{T+1-N_2-N_1}$.

  Combining the two equations, we then obtain 
  
  \begin{equation}
      \tilde{n} \leq \frac{k \cdot (T+ 1 - \max(i, j))}{T+1-N_2-j} + \frac{k \cdot (i - \min(i, j))}{T+1-N_2-N_1}
  \end{equation}
  
  Since $N_1 > j$, $\frac{1}{T+1-N_2-N_1} > \frac{1}{T+1-N_2-j}$ and so this expression is maximized by the highest possible value of $i$. This is $i = N_1$ by the problem definition (note that $T+1-N_2-j \geq 0$ is satisfied since $T+1-N_2-N_1 \geq 0$ and $j < N_1$). Therefore, we obtain 
  
  \begin{equation}\label{eq:n2worst}
      \begin{split}
      & n_2^* = \frac{k \cdot (T+ 1 - N_1)}{T+1-N_2-j} + \frac{k \cdot (N_1 - j)}{T+1-N_2-N_1} \\ &= (T+1-N_2-N_1)(T+ 1 - N_1) \\ &+ (T+1-N_2-j)(N_1 - j)
      \end{split}
  \end{equation}
  
  The rate $R_2 \triangleq \frac{k}{n_2}$ can thus be expressed as

  \begin{equation}\label{eq:r2}
      \begin{split}
      &R_2 = \\ &\frac{(T+1-N_2-j)(T+1-N_2-N_1)}{(T+1-N_1)(T+1-N_2-N_1) + (N_1 - j)(T+1-N_2-j)}
      \end{split}
  \end{equation}
  
  , matching the expression in Theorem \ref{thm:rates} for $R_2$. The value of $R_1$ follows directly from the source-to-relay encoding discussed earlier. 
  
  To choose the optimal value of $j$, we perform a grid search over all values of $j$ from 0 to $N_1$ and choose the one which gives the highest value of $R_2$. 
  Finally, an example of this upper bound on $n_2$ being achieved with equality can be seen in Table \ref{tab:srex1} where the relay packet at time 5 has a length of 10, equal to Equation \ref{eq:n2worst} with $T = 6, N_1 = 2, N_2 = 3, j = 0$. 

  \subsection{Field Size Requirements}\label{sec:fieldsize}

  In this section, we prove that a field size of $q = T+1-j$ is sufficient. Recall that depending on the erasures observed, we mentioned using either a $(T+1-j, T+1-j-N_2)$ or a $(T+1-N_1, T+1-N_1-N_2)$ MDS code from relay to destination, concatenated several times. Meanwhile, the source-to-relay code is always fixed as a $(T+1-N_2, T+1-N_1-N_2)$ MDS code concatenated $T+1-N_2-j$ times. Using a property of MDS codes, we know that with $q \geq n$ we can construct an MDS code over $\mathbb{F}_q$. Out of all the values of $n'$ for the 3 codes mentioned, we see that $T+1-j$ is the maximum and thus with $q = T+1-j$, we can construct MDS codes. It remains to show that the claimed codes are indeed robust against any $N_2$ erasures, especially for the case where $x_t^{(1)}$ is erased. 

  As discussed before, if $x_t^{(1)}$ is not erased or if $j = 0$, the relay-to-destination code is fixed and so applying these MDS codes is straightforward as the sizes of each subpacket $C(t; i)$ are constant. For the case of $j > 0$, where the sizes of the subpackets may change, we detailed in Section \ref{sec:subsetrdenc} how a notion of grouping is used to allow short MDS codes to continue being used. 
  
  While it is not necessary that all symbols in a particular group $g_{r, t}$ are sent as part of the same relay packet initially (e.g. when higher rates are used or there are a low number of available symbols), we now claim that the parity symbols are sufficient to recover all of the $g_{r, t}$'s. For this, we first consider the received packets at the destination, $y_{t+j}^{(2)}$ to $y_{t+T}^{(2)}$. $y_{t+j}^{(2)}$ to $y_{t+T-N_2}^{(2)}$ include the message symbol estimates from time $t$ while $y_{t+T-N_2+1}^{(2)}$ to $y_{t+T}^{(2)}$ include the parity symbols for the diagonal codes. The number of message symbols from time $t$ contained in one packet can be between 0 and $T+1-j-N_2$ while the number of parity symbols is always equal to $T+1-j-N_2$.
  
  We now focus specifically on the packets containing the message symbol estimates i.e. $\dpack{t+j}$ to $\dpack{t+T-N_2}$. We note that each such packet can contain symbols from at most two groups as the size of each relay subpacket (the $C(t; i)$'s) is always less than or equal to $T+1-j-N_2$, and the relay transmits the $g_i$'s in order. In total, we thus have $T+1-N_1-N_2$ groups distributed over $T+1-j-N_2$ time slots. Now when one of these relay packets is erased, some portion of a group or two groups are erased with a total of $\leq T+1-j-N_2$ symbols erased. We then present the following proposition.
  
  \begin{proposition}\label{prop:group}
    Let there be a total of $w \leq N_2$ erasures among the packets $\dpack{t+j}$ to $\dpack{t+T-N_2}$. Then at most $w$ of the symbols $g_{1, t}[p], \dots, g_{T+1-N_1-N_2, t}[p]$ are erased for all $p$ with $1 \leq p \leq T+1-j-N_2$
  \end{proposition}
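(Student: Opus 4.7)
The plan is to show the proposition by a counting argument on which symbols with a fixed within-group index $p$ can appear in any single relay packet, and then translate packet-level erasures into symbol-level erasures.

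First, I would fix an index $p \in [1:T+1-j-N_2]$ and focus on the $T+1-N_1-N_2$ symbols $g_{1,t}[p],\dots,g_{T+1-N_1-N_2,t}[p]$. Recall that the relay orders the estimates of $s_t$ so that $\tilde{s}_t[1],\tilde{s}_t[2],\dots,\tilde{s}_t[k]$ are transmitted in this order, and the groups $g_{r,t}$ are formed from consecutive blocks of length $T+1-j-N_2$ of this sequence. Since all $k$ estimates are transmitted exactly once in the message portion $\dpack{t+j},\dots,\dpack{t+T-N_2}$ (Proposition \ref{prop:indices} guarantees that all $k$ estimates are available by time $t+T-N_2$, as there are at most $N_1$ erasures in that window), each $g_{r,t}[p]$ appears in exactly one of these packets.

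The main step is then to show that each packet $\dpack{t+i}$ for $i\in[j:T-N_2]$ contains at most one of the symbols $g_{1,t}[p],\dots,g_{T+1-N_1-N_2,t}[p]$. Let $\alpha=\alpha_t(t+i)$, so the packet carries a block of $\alpha\leq T+1-j-N_2$ consecutive estimates $\tilde{s}_t[a],\tilde{s}_t[a+1],\dots,\tilde{s}_t[a+\alpha-1]$ from $s_t$. Because the block has length at most the group size $T+1-j-N_2$, it intersects at most two consecutive groups, say $r$ and $r+1$. If it does straddle two groups, then its intersection with group $r$ consists of within-group indices $\{a',a'+1,\dots,T+1-j-N_2\}$ and its intersection with group $r+1$ consists of $\{1,2,\dots,b'\}$, where $a'$ and $b'$ satisfy $(T+2-j-N_2-a')+b'=\alpha\leq T+1-j-N_2$, which forces $b'\leq a'-1$. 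Consequently the two $p$-ranges $[a',T+1-j-N_2]$ and $[1,b']$ are disjoint, so no within-group index $p$ is represented twice in the packet. If the block lies inside a single group, the conclusion is immediate.

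Given this, for any fixed $p$ the symbols $g_{1,t}[p],\dots,g_{T+1-N_1-N_2,t}[p]$ lie in $T+1-N_1-N_2$ distinct packets, at most one per packet, so $w$ packet erasures among $\dpack{t+j},\dots,\dpack{t+T-N_2}$ can erase at most $w$ of these symbols, which is the proposition. The only genuinely delicate point is the disjointness of $p$-ranges in the two-group case; the rest follows from bookkeeping on the transmission order and the fact that each estimate is sent exactly once before the parity phase begins.
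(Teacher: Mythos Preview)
Your proof is correct and follows essentially the same approach as the paper: both arguments hinge on the observation that a single relay subpacket of size at most $T+1-j-N_2$ can contain at most one symbol with a given within-group index $p$, since consecutive symbols $g_{r,t}[p]$ and $g_{r+1,t}[p]$ are separated by exactly $T+1-j-N_2$ positions in the transmission order. The paper phrases this as a brief contradiction argument, while you give a direct counting argument with an explicit case split on whether the subpacket straddles two groups and a clean disjointness computation on the $p$-ranges; the content is the same.
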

  
  \begin{proof}
    
  We proceed by contradiction and assume that there is a $p$ for which $w + 1$ of these symbols are erased. Let one of the relay packet erasures cause $g_{q, t}[p]$ be erased. Then we note that this erasure cannot lead to another of the symbols $g_{1, t}[p], \dots, g_{T+1-N_1-N_2, t}[p]$ being erased since in order of transmission, consecutive symbols here have $T+1-j-N_2$ symbols in between ($g_{i, t}[p], \dots, g_{i, t}[T+1-j-N_2], g_{i+1, t}[1], \dots, g_{i+1, t}[p-1]$ are a total of $T+1-j-N_2$ symbols) and relay subpackets have a size less than or equal to $T+1-j-N_2$. Therefore, having $w + 1$ erased symbols for index $p$ implies that there are $w + 1$ erasures among the packets $\dpack{t+j}$ to $\dpack{t+T-N_2}$. However, this is a contradiction as we assumed that there were $w$ erasures among these packets.
  
  \end{proof}

 From our initial assumption about there being $w$ erasures among packets $\dpack{t+j}$ to $\dpack{t+T-N_2}$, we know that there can be at most $N_2 - w$ erasures among the packets $\dpack{t+T-N_2+1}$ to $\dpack{t+T}$. Then using Proposition \ref{prop:group}, we know that the $(T+1-N_1, T+1-N_1-N_2)$ MDS code corresponding to each value of $p$ has at most $w + N_2 - w = N_2$ erasures. Hence by the properties of MDS codes, we know that this code will be able to correct any $N_2$ erasures and thus the destination can recover the symbols $g_{1, t}[p], \dots, g_{T+1-N_1-N_2, t}[p]$ for all values of $p$. In other words, all source symbol estimates will be decoded at the destination. We note here that the destination uses a decoding function $\varphi_t$ corresponding to a $(T+1-N_1, T+1-N_1-N_2)$ code.

  \subsection{Recoverability of $s_t$ at destination by time $t+T$}\label{sec:achievable}
  
  We start by recalling the following proposition from \cite{adaptiverelay2023} and prove it for our setting. 
  
  \begin{proposition}
      Using this coding scheme, if there are at most $N_2$ erasures from relay to destination, the destination is able to recover an estimate $\tilde{s}_t$ at time $t+T$ as well as $s_t$. 
  \end{proposition}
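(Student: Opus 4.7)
The plan is to split the proof into two logically distinct claims: (i) the destination recovers the symbol estimates $\tilde{s}_t$ by time $t+T$, and (ii) from these estimates, together with previously decoded source packets, the destination reconstructs $s_t$ itself. Claim (i) is the substantive part, and it will reduce to MDS decoding of the relay-to-destination code in three cases, of which the nontrivial one has essentially been handled in Section~\ref{sec:fieldsize}. Claim (ii) then follows immediately from the definition of an estimate together with an induction on $t$.

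For claim (i), I would proceed by case analysis on the relay's mode of transmission for message $s_t$. In Case~A, $x_t^{(1)}$ is not erased, so the relay already possesses all of $s_t$ at time $t$ and transmits the $C(t;i)$ subpackets using the constant-rate $(n'',k'')$ diagonally interleaved MDS code (concatenated $l''$ times). Since $n'' - k'' = N_2$, the diagonal MDS property guarantees recovery of $s_t$ from any $T+1-j$ received packets in $[t+j,t+T]$ despite up to $N_2$ erasures on the relay-to-destination link. In Case~B, $x_t^{(1)}$ is erased and $j=0$; here the relay begins transmitting at time $t+N_1$ and the subpackets again have constant size, so the code reduces to a concatenation of $(T+1-N_1,T+1-N_1-N_2)$ MDS codes, and the same MDS argument applies. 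In Case~C, $x_t^{(1)}$ is erased and $j>0$, so the $C(t;i)$ may have varying sizes; this is exactly the case addressed by Proposition~\ref{prop:group}, which shows that for every index $p$ the group symbols $g_{1,t}[p],\dots,g_{T+1-N_1-N_2,t}[p]$ incur at most $N_2$ erasures across the full window $[t+j,t+T]$, after which MDS decoding of the $(T+1-N_1,T+1-N_1-N_2)$ code recovers them. Collecting all groups reconstructs $\tilde{s}_t$.

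For claim (ii), I would proceed by induction on $t$. The base case is $t=0$: since no source packets precede it, the estimates are the source symbols themselves (the source-to-relay code is systematic). For the inductive step, assume that $s_0,\dots,s_{t-1}$ have been successfully decoded at their respective deadlines. By claim (i), by time $t+T$ the destination has recovered the estimates $\tilde{s}_t[l]$ for every coordinate $l$. By the definition of an estimate, for each $l$ there exists a function $\Psi_{t,l}$ with $\Psi_{t,l}(\tilde{s}_t[l],\{s_i\}_{i=0}^{t-1})=s_t[l]$; applying $\Psi_{t,l}$ using the decoded history yields $s_t[l]$, completing the induction.

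The main obstacle is ensuring that the MDS-decoding argument in Case~C lines up cleanly with the accounting of erasures across the two sub-windows $[t+j,t+T-N_2]$ and $[t+T-N_2+1,t+T]$: one needs that a $w$-erasure split yields at most $w$ group-symbol erasures in the former sub-window and $N_2-w$ parity erasures in the latter, so that each per-$p$ MDS code sees at most $N_2$ erasures. This is exactly the statement of Proposition~\ref{prop:group} combined with the observation that parity packets are of constant size $T+1-j-N_2$; once both are invoked, the rest of the proof is routine.
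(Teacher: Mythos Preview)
Your proposal is correct and follows essentially the same structure as the paper's proof: case analysis on the relay's transmission mode (each case reducing to MDS decoding with $n-k=N_2$, invoking Proposition~\ref{prop:group} and the surrounding argument of Section~\ref{sec:fieldsize} for the variable-size subpacket case), followed by induction on $t$ to pass from estimates $\tilde{s}_t$ to the actual $s_t$. The only cosmetic difference is the case split---you organize the erased case by $j=0$ versus $j>0$, whereas the paper splits it by whether at most $j$ or more than $j$ source-to-relay erasures are ultimately observed (so your Case~C should strictly be subdivided once more, with the $\gamma(T-N_2)\le j-1$ subcase handled exactly as in your Case~A)---but the substance is identical.
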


  \begin{proof}
      We have two main cases for the source to relay packets. If $x_t^{(1)}$ is not erased, all of $s_t$ is received and thus the only code used from relay to destination will be a $(T+1-j, T+1-N_2-j)$-MDS code concatenated $l''$ times which by definition can correct any $n'' - k'' = N_2$ erasures.
  
      Similarly, if $\spack{t}$ is erased, the code from relay to destination will either be a $(T+1-j, T+1-N_2-j)$ MDS code if there are less than or equal to $j$ erasures, or a $(T+1-N_1, T+1-N_2-N_1)$ MDS code if there are more than $j$ erasures. Note that this also covers the possibility that within message variable rates are used as discussed in Section \ref{sec:fieldsize}. For all of these possibilities, we can see that the MDS codes have $n'' - k'' = N_2$ and can thus correct any $N_2$ erasures. 

      Finally, given that we have shown in all cases that the symbol estimates $\tilde{s}_t$ can be recovered for all $t$ by time $t+T$, an induction argument can be used to show that the original symbols are also recovered at time $t+T$. This is because we can remove the interference in symbol estimates by using the values of the symbols from previous times.
  
  \end{proof}
  
  This proposition shows that our scheme is $(N_1, N_2)$-achievable, thereby allowing for the proof of Theorem \ref{thm:rates} to be completed. Finally, for completeness the relay can inform the destination of the erasure pattern it observes. At any time $t$, it can forward the observed erasure sequence from time $t-T$ to $t$. Doing so requires sending a binary sequence of length $T+1$ as a header which can be represented by $\delta = \lceil (T+1) \log_q 2 \rceil$ symbols. 

\section{Results}\label{sec:numresult}

  In this section, we analyze the performance of our proposed scheme in terms of achievable rates and packet size requirements for various parameter settings $T, N_1, N_2$ as well as the loss probability when simulated over different types of statistical channels.

  \subsection{Performance}
  
   Figures \ref{fig:rate} and \ref{fig:length} compare the achievable rate and packet size requirements of our scheme against the nonadaptive SWDF scheme (\cite{silas2019}) and the standard adaptive relaying scheme (\cite{adaptiverelay2023}) over various choices of system parameters.

  \begin{figure}[htbp]
      \centerline{\includegraphics[width=\linewidth]{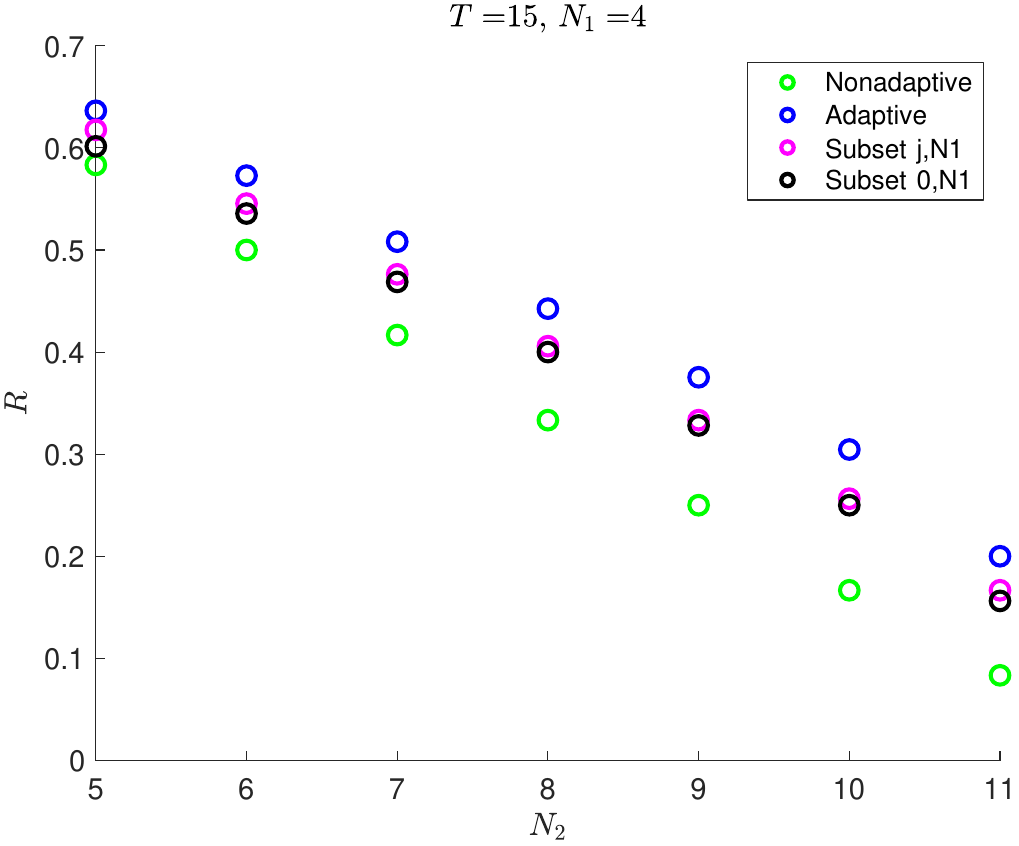}}
      \caption{Rates of each method with fixed values of $T$ and $N_1$, varying $N_2$. }
      \label{fig:rate}
      \end{figure}

  \begin{figure}[htbp]
      \centerline{\includegraphics[width=\linewidth]{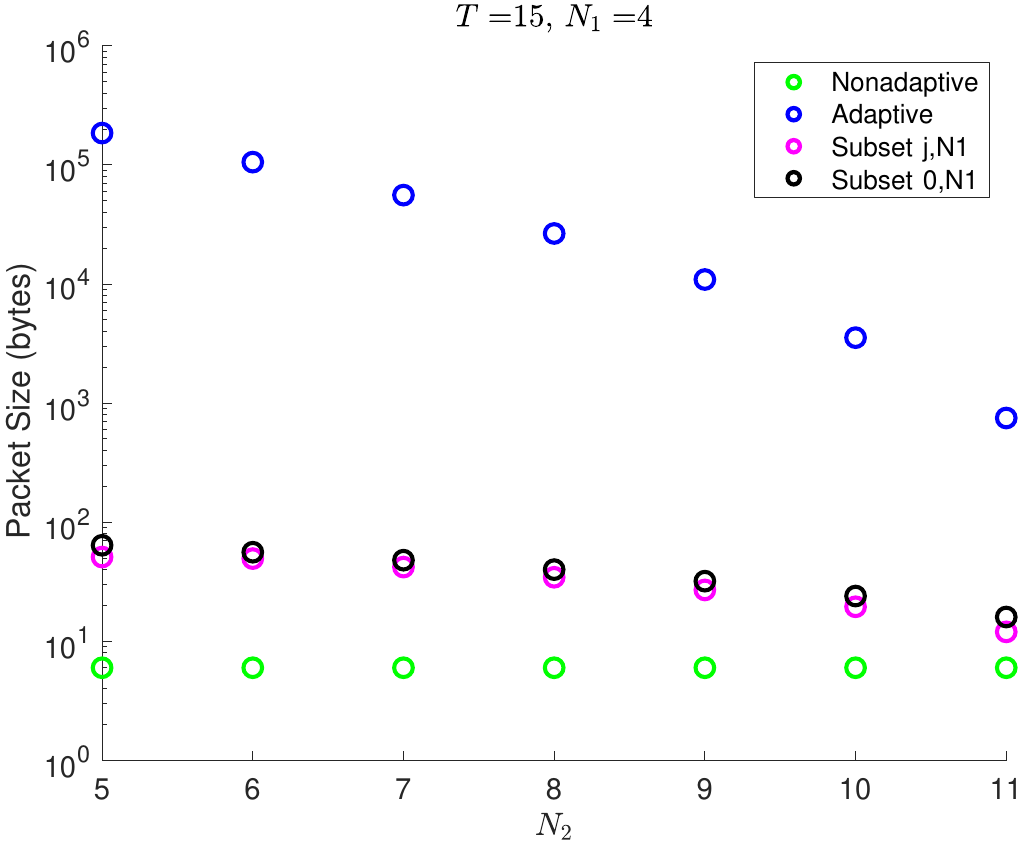}}
      \caption{Packet size of each method with fixed values of $T$ and $N_1$, varying $N_2$.}
      \label{fig:length}
      \end{figure}

  For our method, we show both the case of fixing $j = 0$ (Subset $0, N_1$) and using the optimal value of $j$ (Subset $j, N_1$). As we can see, while the rate of our method is upper bounded by the standard adaptive scheme, it comes with the benefit of having several orders of magnitude lower packet sizes. We also note that while the optimal values of $j$ may vary between parameters, in practice this change is not significant and for most combinations the optimal value is $j = 1$. 

  The packet size calculation follows directly from the field size (Theorem \ref{thm:rates}) and the number of symbols used in each packet. As an example computation of the packet sizes (in bytes) for all three methods, we consider the parameters $T=15, N_1=4, N_2=6$. For standard adaptive codes, a long MDS code is used with $n = 52788$. This means that we need $q \geq 52788$ and so every symbol will need $\lceil \log_2 52788 \rceil = 16$ bits or 2 bytes to represent. The size of the overall packet is thus $52788 \times 2 \approx 106$ KB which is significantly larger than packet sizes in typical internet protocols such as TCP where packet sizes of 1-2 Kb are used. Meanwhile for subset adaptation, we use short MDS codes with $q = T+1 - j = 16$. Thus each symbol needs $\log_2 16 = 4$ bits to represent and the size of the overall packet is then $4 \times n_2 = 4 \times 112 = 448$ bits or 56 bytes. Finally, the nonadaptive scheme only uses one concatenation of a short MDS code with $n = 12$ and this requires $\lceil \log_2 12 \rceil \cdot 12 = 48$ bits or 6 bytes to represent.

  \subsection{Numerical Simulations}
  
  We perform simulations over a statistical channel to evaluate the performance of our scheme against random erasure sequences. Here an erasure happens at any time in either of the two links with i.i.d probabilities of $\alpha, \beta$ respectively. Since it is now possible that the number of erasures within a window of size $T$ exceeds $N_1, N_2$, some packets may be lost. We thus compute the probability of a packet being lost and compare it with the nonadaptive scheme from \cite{silas2019}. Note that computing the corresponding loss probabilities for the general adaptive relaying method \cite{adaptiverelay2023} is intractable and so we do not compare against it. This is another advantage of our scheme as it can be analyzed more precisely under different types of erasure channels. The method to compute the loss probability for the non adaptive case has been described in \cite{silas2019}. Specifically, a given packet $s_t $ is classified as being lost if there are more than $N_1$ erasures in the S-R link during the times over which the diagonals containing symbols in $s_t$ are transmitted i.e. at least one of the symbols in $s_t$ is not decoded by the relay, or if there are more than $N_2$ erasures in the R-D link for the times over which the symbols are then forwarded by the relay.

  We now describe the method to compute the loss probability for our scheme. First, we let $j = 0$. Then if $x_t^{(1)}$ is not erased, the relay will have all symbols belonging to $s_t$ at time $t$ and therefore will transmit these symbols at a rate of $\frac{k}{T+1-N_2}$ i.e. over the interval $[t, t+T-N_2]$ after which the $N_2$ timeslots $[t+T-N_2+1, t+T]$ will be used for parity symbols. In other words, we can see that if there are more than $N_2$ erasures in the interval $[t, t+T]$, the destination will not be able to decode all symbols from $s_t$ and this is thus the condition for a packet loss. If on the other hand $x_t^{(1)}$ was erased, then the relay begins transmitting at time $t+N_1$ and we have the same scenario as the nonadaptive case. 
  
  For $j > 0$, if the relay has observed less than or equal to $j$ erasures by time $t+j$, it will begin transmitting at $t + j$. In this case, the interval where symbols are transmitted is $[t+j, t+T]$ and so if there are more than $N_2$ erasures in $[t+j, t+T]$, a packet loss occurs. Here, since symbol estimates are used, we also need to check for erasures in the first link to ensure that each diagonal has less than or equal to $N_1$ erasures so that all of the previous symbols used in the estimate of $s_t$ can be decoded and there is no propagation of error. 
  Finally if the relay has observed greater than $j$ erasures at time $t+j$, it will instead transmit starting at $t+N_1$ as in the nonadaptive case and then the loss probability calculation is the same as before.

 For our simulations, we use 10000000 total packets and vary the erasure probabilites $\alpha, \beta$. For each value of $i$, we determine if $x_i$ is recovered successfully at the destination and calculate the loss probability as the fraction of lost packets to the total number of packets. We present results where the probability of erasure $\alpha=\beta$ is varied under a fixed rate 
    in Figure \ref{fig:lossprob1} and the results where the rate is varied (by changing the values of $N_1, N_2$) under fixed values of erasure probabilities $\alpha, \beta$ in Figure \ref{fig:lossprob2}. 
  
  \begin{figure}[htbp]
  \centerline{\includegraphics[width=\linewidth]{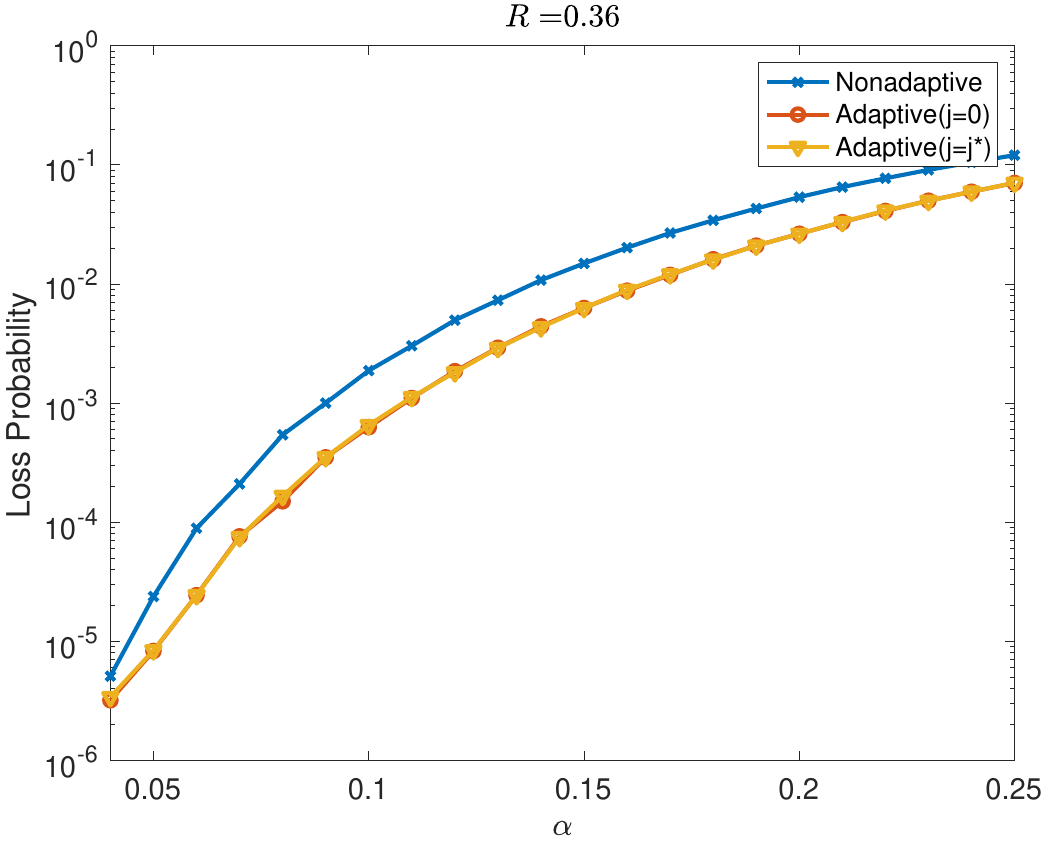}}
  \caption{Loss probability of all three approaches with a fixed rate of 0.36 and varying channel erasure probabilities $\alpha = \beta$.} 
  \label{fig:lossprob1}
  \end{figure}

  \begin{figure}[htbp]
  \centerline{\includegraphics[width=\linewidth]{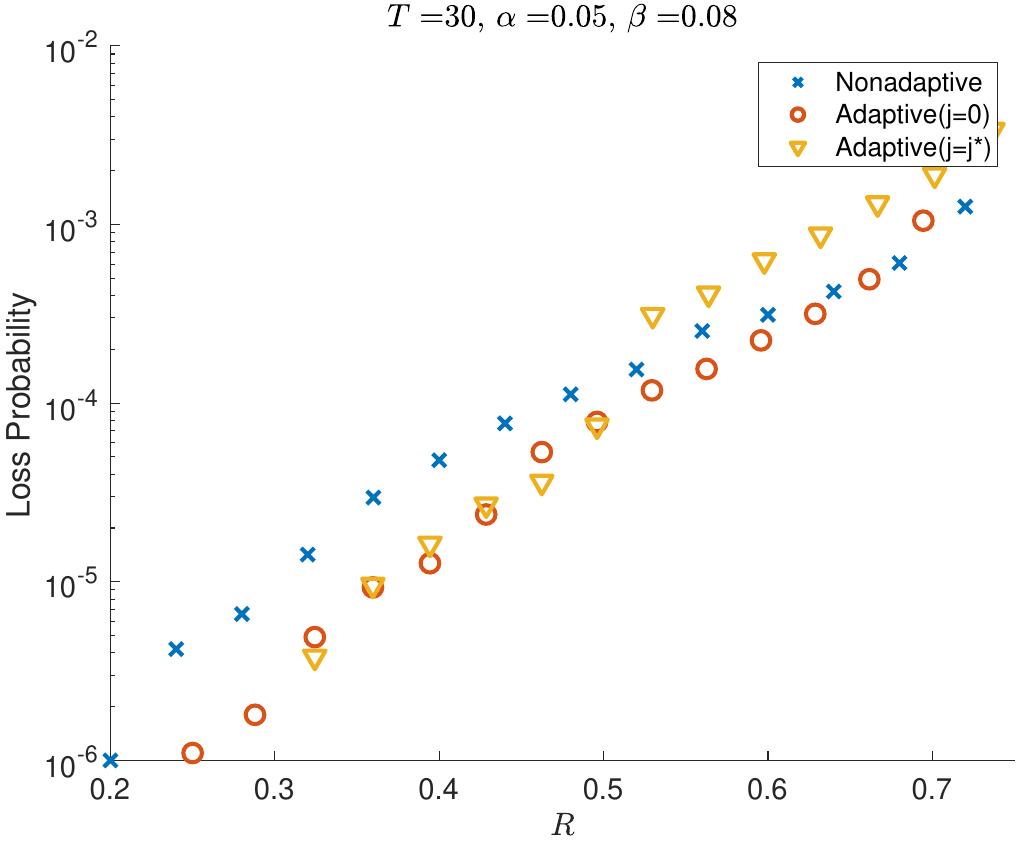}}
  \caption{Loss probability of all three approaches with varying rate and channel erasure probability $\alpha = 0.05$, $\beta = 0.08$.}
  \label{fig:lossprob2}
  \end{figure}
  
  From these plots, we see in general that at lower rates, the adaptive approach gives a lower loss probability than the nonadaptive case whereas at higher rates, the loss probabilities come closer together. Intuitively this is the case because a lower rate corresponds to a higher value of $N_2$. Since the condition for a loss due to the first link is the same for all methods, we focus on the losses caused by the second link. As in \cite{adaptiverelay2023}, we also only focus on the case where $N_2 \geq N_1$ so that the adaptive methods have a higher rate for the same parameters. With a higher value of $N_2$, the codes can withstand more erasures in the second link without suffering from a packet loss and thus for fixed probabilities of erasure will have a better rate vs loss probability tradeoff. At higher rates (lower values of $N_2$), the codes now have a larger probability of erasure in the second link and so there will be more patterns which cause a loss for a fixed value of $N_2$. The condition for the adaptive method is that there are $> N_2$ erasures in the interval $[t, t+T]$ while for the nonadaptive case the interval is $[t+N_1, t+T]$ (since symbols from $s_t$ are only transmitted starting at $t, t+N_1$ respectively). With the interval being larger in the adaptive case, there is thus a greater probability of loss there. To summarize, there is a tradeoff between the improved rate offered by the adaptive methods which generally leads to lower error probabilities for low values of $R$ but as $R \to 1$, the probability for the nonadaptive scheme approaches or becomes lower.

\section{Extensions to Multiaccess Adaptation}\label{sec:extension}

In this section, we show how it is possible to apply the ideas from our scheme to a more complicated multiple access network setting \cite{kasper2022}. Here, the network is again modeled as a relay network, with the difference being that there are two (or more) sources, as shown in Figure \ref{fig:MARC}. The streaming code is furthermore described as a $(n_1, n_2, n_3, k_1, k_2, T)$ streaming code. Each of the three links is unreliable and introduces at most $N_1, N_2, N_3$ erasures respectively within a window of size $T+1$. Corresponding to the two sources, there are now two rates $R_1$ and $R_2$ with $R_i = \frac{k_i}{\max(n_1, n_2, n_3)}$ and these two rates together form a rate pair. The enclosure of all achievable rate pairs forms a rate region. \cite{kasper2022} has studied this setting for nonadaptive codes and has derived upper bounds for the network, with bounds on the individual values of $R_1, R_2$ as well as on the sumrate $R_1 + R_2$. These bounds are as in the following equations. 

\begin{align}
  R_1 & \leq C(T-N_3, N_1) \\
  R_2 & \leq C(T-N_3, N_2) \\
  R_1 + R_2 & \leq C(T-N_2, N_3) 
\end{align}

\begin{figure*}
  \centering
  \includegraphics[draft=false,scale=1]{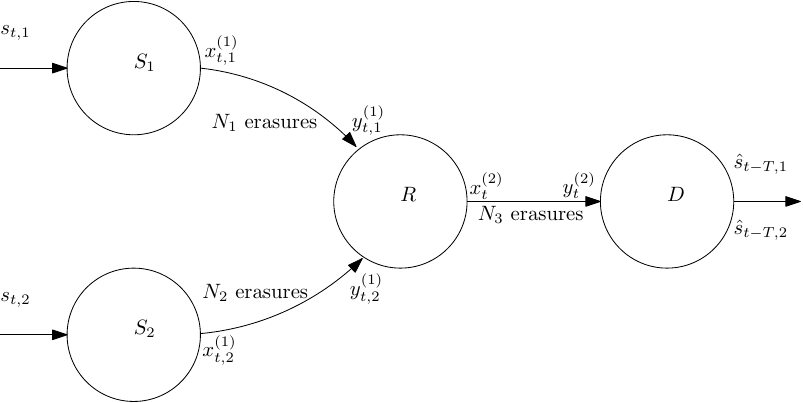}
  \caption{Multiple Access Relay Channel \cite{kasper2022}}
  \label{fig:MARC}
\end{figure*}

The single user bounds hold for all possible codes, however the sumrate bound is dependent on the codes being nonadaptive. The natural question then is if rate pairs higher than the sumrate can be achieved using adaptation.

We find that this is indeed possible with a trivial extension of our scheme. Specifically, we consider two copies of the subset adaptive code, one for the system parameters being $(T, N_1, N_3, j_1)$ and the other for the parameters $(T, N_2, N_3, j_2)$. The discussion in Section \ref{sec:subset} then leads to the formation of $(n_1, n_3, k_1, T)$ and $(n_2, n_3, k_2, T)$ single user streaming codes respectively. These two codes can then be converted to multiaccess streaming codes by letting $n_2 = k_2 = 0$ in the former and $n_1 = k_1 = 0$ in the latter. From this, we have $(N_1, N_2, N_3)$ achievable $(n_1, 0, n_3, k_1, 0, T)$ and $(0, n_2, n_3, 0, k_2, T)$ streaming codes. Finally, the following lemma from \cite{kasper2022} is used to concatenate the two codes and construct a rate region.

\begin{lemma}\label{lemma:timesharing}
	Assume there exists an $(N_1, N_2, N_3)$-achievable $(n_1, n_2, n_3, k_1, k_2, T)_{\mathbb{F}}$-streaming code, and another $(N_1, N_2, N_3)$-achievable $(n'_1, n'_2, n'_3, k'_1, k'_2, T)_{\mathbb{F}}$-streaming code. Then, for any $A, B \in \mathbb{Z}$, there exists an $(N_1, N_2, N_3)$-achievable $(A n_1 + B n_1', An_2 + B n_2', A n_3 + B n_3', A k_1 + B k_1', A k_2 + B k_2', T)_{\mathbb{F}}$-streaming code.
\end{lemma}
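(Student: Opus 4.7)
The plan is to construct the combined code as a parallel composition (direct sum) of $A$ independent copies of the first code running alongside $B$ independent copies of the second code. At each time slot $t$, I would split the source-1 message $s_t^{(1)} \in \mathbb{F}^{Ak_1 + Bk_1'}$ into $A$ blocks of $k_1$ symbols and $B$ blocks of $k_1'$ symbols, feeding these to the $A$ copies of the first source-1 encoder and the $B$ copies of the second source-1 encoder respectively; the same partitioning is applied to the source-2 message. Each of the $A+B$ per-copy encoders then produces its own source-1 and source-2 packets of sizes $n_1, n_2$ (or $n_1', n_2'$), and the combined source-1 and source-2 packets are the concatenations of these per-copy outputs, yielding packets of length $An_1 + Bn_1'$ and $An_2 + Bn_2'$, exactly as required.

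Next, I would define the relaying function of the combined code by having the relay parse each incoming source packet into its $A+B$ constituent subpackets and feed each subpacket into the corresponding per-copy relaying function; the outputs are concatenated to form the combined relay packet of length $An_3 + Bn_3'$. The key observation is how erasures propagate: because the channel is a packet-erasure channel that maps an entire combined packet to $*$, an erasure in the combined code simultaneously induces an erasure on the corresponding subpacket of every one of the $A+B$ copies. In particular, if the combined erasure sequences $e^{(1)}, e^{(2)}, e^{(3)}$ lie in $\Omega_{N_1}, \Omega_{N_2}, \Omega_{N_3}$ respectively (or, using Remark \ref{rem:1}, satisfy the sliding-window constraints), then the induced erasure sequence seen by each individual copy is identical to the combined sequence and hence also lies in $\Omega_{N_1}, \Omega_{N_2}, \Omega_{N_3}$.

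The decoder is then defined analogously: at time $t+T$, the destination parses each received relay packet (treating $*$ as a uniform erasure across all $A+B$ subpackets), runs each per-copy decoding function $\varphi_t^{(j)}$ on its own subpacket history, and concatenates the per-copy outputs to form $\hat{s}_t$. Since each of the underlying codes is $(N_1, N_2, N_3)$-achievable, each per-copy decoder recovers its message block exactly, so $\hat{s}_t = s_t$ for both sources. The delay constraint $T$ is preserved because every per-copy decoder uses only observations through time $t+T$, and no field extension is needed because both codes already operate over the common field $\mathbb{F}$. I do not anticipate any genuine obstacle here beyond the book-keeping of the parallel composition; the argument works precisely because the packet-erasure model treats the concatenated packet as a single atomic unit, which is exactly the structure that lets each copy inherit an admissible erasure sequence from the combined channel.
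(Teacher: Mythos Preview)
Your proposal is correct and is the standard time-sharing/parallel-concatenation argument; the paper itself does not reprove the lemma but simply cites \cite{kasper2022}, whose proof is exactly the construction you describe. One cosmetic point: the lemma as stated allows $A,B\in\mathbb{Z}$, but your construction (and any sensible one) needs $A,B\ge 0$; this is almost certainly a typo in the statement rather than a gap in your argument.
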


\begin{proof}
    In \cite{kasper2022}.
\end{proof}

We present the result of a rate region achieved by this adaptive scheme in Figure \ref{fig:macrate}. We can see that we obtain a rate region higher than the sumrate at almost all points, thus showing significant gains over nonadaptive schemes. Moreover, the resulting scheme also benefits from the practicality and field size improvements obtained by our subset scheme as we have only concatenated multiple copies of single user subset codes so that the underlying MDS codes (and thus the field size) still remain the same. Specifically, the field size requirement will be $\max(T+1-j_1, T+1-j_2)$ where $j_1, j_2$ are the values of $j$ used by the two users.

  \begin{figure}[htbp]
    \centerline{\includegraphics[width=\linewidth]{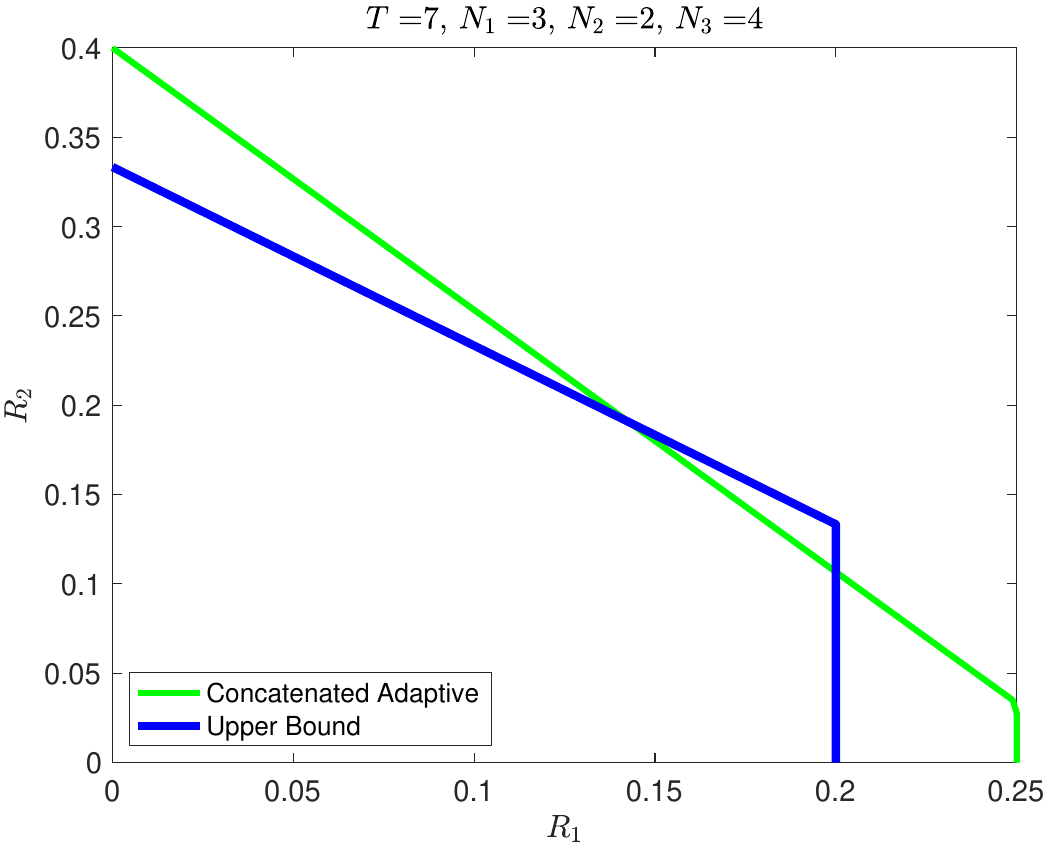}}
    \caption{Adaptive rate region, $T=7, N_1=3,N_2=2,N_3=4$.}
    \label{fig:macrate}
    \end{figure}

\section{Limitations and Future Work}

While this paper has focused on the case where only a single number of erasures is adapted to, specifically $j$, it is possible to consider a larger number of possibilities and thus more options for adaptive rates. Note that including all numbers from 0 to $N_1$ corresponds to the fully adaptive case as in \cite{adaptiverelay2023} amd thus we do not consider it. Using a larger number of possibilities will allow for higher rates compared to our approach but again will have the tradeoff of causing the packet size to increase. Future work can characterize the full tradeoff which exists when allowing more possibilites for the adaptation power of the relay. Other areas of interest include designing adaptive codes for specific types of channel erasure models or adaptive codes which allow for variable delay constraints for different packets. Similarly, one can consider adaptive code constructions with different levels of error protection which may achieve higher rates at the cost of not having a formal guarantee that all packets can be decoded successfully.

\section{Conclusion}

In this paper, we have presented a practical adaptive relaying scheme for streaming codes in a three node relay network. Building on the idea of adaptation, our simplified scheme allows for significantly improved practicality with a small cost to the achievable rates. We have compared our scheme against prior proposed schemes in terms of achievable rates, packet size requirements and loss probabilities, showing the utility of our scheme in achieving a good tradeoff between practicality and improved rates. Finally, we have discussed an extension to the multiaccess network setting where the advantages of our scheme can be directly applied.

\bibliographystyle{IEEEtran}
\bibliography{IEEEabrv,paper}

\end{document}